\newtheorem{cor}{Corollary}
\newtheorem{conj}{Conjecture}
\newcommand{\veca}{\mathbf{a}}
\newcommand{\vecb}{\mathbf{b}}
\newcommand{\vecs}{\mathbf{s}}
\newcommand{\topc}{\mathbf{top}}
\newcommand{\calF}{\mathcal{F}}
\newcommand{\score}{\mathbf{sc}}
\newcommand{\default}{\psi}
\newcommand{\pord}[3]{#1 \succ_{#3} #2}
\newcolumntype{L}{>{\centering\arraybackslash}m{1.5cm}}
\begin{document}

\title{Reaching Consensus Under a Deadline}

 


 \author{Marina Bannikova \and Lihi Dery \and Svetlana~Obraztsova \and Zinovi Rabinovich \and Jeffrey S. Rosenschein }

\institute{
       Marina Bannikova \at 
        Universitat Aut\`onoma de Barcelona, Spain
        \email {marina.bannikova@uab.cat}
        \and
        Lihi Dery (corresponding author) \at
              Ariel University, Israel 
              \email{lihid@ariel.ac.il}           
        \and
          Svetlana Obraztsova \at
              Nanyang Technological University, Singapore
              \email {lana@ntu.edu.sg}
        \and
        Zinovi Rabinovich \at
        Nanyang Technological University, Singapore
              \email {zinovi@ntu.edu.sg}
        \and
        Jeffrey S. Rosenschein \at
        The Hebrew University of Jerusalem, Israel
        \email{jeff@cs.huji.ac.il}
}
 
\date{Received: March 2020 / Accepted: December 2020}
\journalname {Autonomous Agents and Multi-Agent Systems}

\maketitle

\begin{abstract}


Group decisions are often complicated by a deadline. For example, in committee hiring decisions the deadline might be the next start of a budget, or the beginning of a semester. It may be that if no candidate is supported by a strong majority, the default is to hire no one - an option that may cost dearly. As a result, committee members might prefer to agree on a reasonable, if not necessarily the best, candidate, to avoid unfilled positions.
In this paper we propose a model for the above scenario---\emph{Consensus Under a Deadline (CUD)}---based on a time-bounded iterative voting process. We provide convergence guarantees and an analysis of the quality of the final decision. An extensive experimental study demonstrates more subtle features of CUDs, e.g., the difference between two simple types of committee member behavior, lazy vs.~proactive voters. Finally, a user study examines the differences between the behavior of rational voting bots and real voters, concluding that it may often be best to have bots play on the voters' behalf. 

\keywords{Social Choice \and Consensus \and Iterative Voting \and Group Decisions \and Deadline}

\end{abstract}

\section{Introduction}\label{sec:intro}
We study the problem of arriving at a joint decision (a consensus) under a deadline, based on the preferences of multiple voters. The voters' task is to find an alternative that the majority agrees upon, before some predefined deadline is reached. The majority is predefined and can vary from 51\% of the votes to unanimous agreement. 
As different individuals have different preferences, a decision does not occur immediately, and more than one round of voting may be required. Hence, a voting process takes place, where voters potentially change their ballots as the deadline approaches.

We define a strict, formal, time-bounded iterative voting process. The process begins with each voter revealing her most-preferred alternative. 
If no alternative reaches the required majority, a multi-stage voting process begins.
At each stage, all voters that wish to change their ballot apply for a voting slot. A voter may choose to change her ballot if, for instance, she realizes that her most-preferred alternative has no chance of being elected. She then might decide to vote for another alternative. One voter is chosen randomly from all those who applied for a voting slot; the chosen voter casts her new ballot. Then, the voting result is updated and publicized. The process continues in rounds until a consensus is reached, or until the deadline is reached, the sooner of the two. 
We assume that each voter has a private, strict preference order over the alternatives and that the number of alternatives is fixed. Bargaining is not permitted. We further assume that the voters are rational but strategic. The amount of strategic ballot changes is unlimited and subject only to the deadline constraint. Each stage (or round) is defined as one clock tick. Lastly, we assume that a failure to reach a consensus is the worst outcome for all voters. 

\subsection{Motivation} 
As a motivating scenario, consider an academic hiring committee. There are a few shortlisted candidates, and each committee member has reviewed their merits, attended their talks, and formed a preference order over the candidates, reflecting her opinion of them.
The committee has to reach a decision with a large majority, and has only a limited time to decide before the next budget period or academic term begins. If the committee fails, the lack of a new faculty member might have a budgetary impact, and cause a higher teaching load - the worst outcome for everyone. The faculty's secretary provides support for the ensuing iterative voting procedure. He collects votes, summarizes them and publicizes the results to all committee members.
However, the secretary can perform only a finite, reasonable number of such steps.
So if any committee member wishes to change her vote, there is only a finite, predetermined number of opportunities she will have to do so.
Under our model, the secretary will only accept a vote change from one committee member per iteration.

The restriction of one vote change per round is not an arbitrary design choice, but often matches real-world needs. As a practical example, consider an event that occurred in the hometown of one of the authors. The town is divided into two neighborhoods; each of the neighborhoods contains one public elementary school. 
According to Ministry of Education regulations, a class cannot contain more than 42 children. Furthermore, a class may only be split into two classes, if it contains more than 42 children. 
Children entering first grade are automatically assigned to the school in their neighborhood. 
However, recently, 41 first grade children were allocated to one neighborhood school and 30 children to the other. 
In an attempt to decrease class sizes, and for the benefit of the young children, the town canceled the automatic allocation and allowed parents to register their child to either one of the schools, hoping that this would lead to either three small-sized first grade classes, or at least two more equally-sized classes.
As both schools are perceived as equally good, parents prefer their children to attend the school in their own neighborhood. However, they also prefer their children to study in a class that contains a minimal number of children.
Sadly, no registration requests were filed. Parents assumed other parents would change their registration and were reluctant to do so themselves, even though in hindsight, each parent regretted not changing their registration (a no-action variant on ``tragedy of the commons''). 
Had the town performed an iterative voting process, publicizing the current results at each stage, and bounded by a deadline, these first graders would likely have begun their education in less-crowded classes.

Other examples, where consensus under a deadline is required, include 
a selection of a CEO of a company, a scientific committee deciding where to hold next year's conference, and even more informal problems, such as a major holiday family dinner venue.

All the above cases have several common features. First, there is a strict deadline for reaching an agreement: the start of an academic year, national holidays, the budget approval deadline, or dinnertime.
Second, assuming that individuals at least somewhat differ in their preferences, it is unlikely that a consensus will be reached immediately; a consensus is usually reached, if at all, only after several rounds of a sequential voting process. 

\subsection{Contributions}
We provide, to the best of our knowledge, the first model for iterative voting with a deadline (Section~\ref{sec:model}). The model can be generalized to other voting rules, but for ease of exposition, we initiate this line of research with a specific model, namely, \emph{Consensus Under a Deadline} (CUD), based on Plurality with a threshold (also known as Majority). We establish the theoretical properties of CUDs, such as termination, guarantees of no runs ending with a default, and additive Price of Anarchy bounds (Section~\ref{sec:cud_theory}). CUDs adhere to a simple protocol, and we can effectively simulate them to investigate statistical properties; we provide an encompassing experimental analysis of their tradeoffs (Section~\ref{sec:simulations}). 
In particular, we measure the effects of voter behavioral types (lazy vs.~proactive) on the number of voting steps and the additive Price of Anarchy. 

Furthermore, we introduce the CUD game, an on-line voting game that we designed as a user-study, in order to examine how real users behave (Section~\ref{sec:userstudy}). The collected data allows us to analyze human behavior such as how often a consensus is reached, and when (if at all) strategic voting occurs; perhaps most importantly, we also examine the quality of the final decision reached by a group of people, in terms of the average reward and the additive price of anarchy. We compared our results with the performance of rational bots, and a mixture of bots and humans. Consequently, we provide some insight into bot vs.~human voting and propose that bots may have a big advantage in consensus reaching scenarios with a deadline, as they increase the convergence percentage and the general satisfaction of all voters. 


A short version of this paper, with preliminary results, was presented in a workshop \citep{bannikova16}. That version contained theorems with no proofs and no examples, and some experimental simulations.
The CUD game was presented as a short conference paper \citep{Yosef:2017}. 
An independent research that used the CUD game to study Alexithymia traits was led by one of the authors ~\citep{Gvirts_Dery}. 


\section{Related Work}\label{sec:rel_work}
Reaching a consensus has been extensively studied in the field of group decision making \citep{herrera2007consensus, DONG201445,zhang2019consensus}. Recently, social influence has began to be explored \citep{WU201739, capuano2017fuzzy}. However, we study a scenario where voters do not interact or bargain. 

Similarly to us, Yager and Alajlan \citeyearpar{YAGER2015170} suggest iterative rounds where voters provide modified versions of their subjective probability distribution. In each round the voters see the current group aggregated probability distribution and are allowed to change their own probability distribution. However, differently from us, in their model, the voters  need to agree on the \textit{aggregated distribution}. In our model, voters are all required to agree on the same candidate, i.e, they are required to submit the same distribution.

In a Consensus Under a Deadline (CUD) game, a \emph{consensus} on one \emph{alternative} must be reached within a fixed \emph{deadline}. If the voters do not reach a consensus, a default candidate is declared. 

To the best of our knowledge, this paper is the first to analyze theoretical features of CUD games in iterative voting, and to demonstrate them experimentally, with a focus on convergence time and final outcome quality. Furthermore, we study human behavior in CUDs, our goal being to examine whether humans behave rationally.

\subsection{Myopic voters}
In principle, hierarchies of beliefs and cognitive hierarchies, have been around for quite some time and across the AI board (e.g., definitely non-exhaustively, ~\cite{Halpern:1985:GML:1625135.1625229,monderer_samet_89,Halpern:1990:KCK:79147.79161,boergers_94}), and even of a freshly renewed interest in voting games specifically (see, e.g., \cite{chc_2004,cc-gi_2013,elkind2020cognitive}). 
In reality, capabilities of the majority of people are limited to very shallow cognitive hierarchies even in simple games, as shown in experiments (see \cite{stahl_wilson_1994, nagel_95}).
Handling both beliefs about other player behaviors and their preferences, is a much more difficult proposition even for computational agents (see, e.g., \cite{gmytrasiewicz_doshi_2005}).
        \footnote{Many a joke deals with chains of ``I know that he thinks that I know...", and are in fact based on human inability to handle deep nested beliefs. Comically this is even witnessed in popular culture, e.g., the episode of ``Friends'': ``The one where everybody finds out''.} 
Furthermore, some of the more recent analysis suggests that a large proportion of people are in fact myopic \citep{wl-b_aaai_2010}. 
Therefore, because we seek a model that would easier relate to the real-world human behavior trends, we adopt the assumption of myopic agent strategies.

\subsection{Iterative Voting Games and the question of convergence}
A CUD game is a type of iterative voting game (see, e.g.,~\cite{MPRJ10,reijngoud2012voter,n-dork_2015,omprj_2015_aaai, meir2016strong, dery2019lie}). However, CUDs have several unique  features.
First, although CUDs do utilise a known voting rule, they work directly with the set of possible winners (i.e., alternatives that might be chosen by the majority), and behave much like non-myopic games based on local-dominance (see, e.g.,~\cite{Meir:2014:EC,loprr_AGT_2015,m_2015}).
On the other hand, the distinction between  lazy and proactive voter behaviour links CUDs with biased voting (see, e.g.,~\cite{emos_2015}).

The standard assumption in the iterative voting literature is that voting processes do not terminate, unless no voter has a way to further manipulate the outcome. Convergence is the subject of an extensive research effort, both to determine when these processes terminate and with what ballot profile (\cite{MPRJ10,reijngoud2012voter,RW12,omprj_2015_aaai,koolyk2016convergence}). 
CUDs, on the other hand, always terminate and declare a winner (see Theorem~\ref{thm:cuds_stop}). Our framework will always converge if the voters are rational and there is a sufficient amount of time before the deadline.

Some models focus on choosing one alternative with a certain degree of consensus. The consensus can be a simple majority, a majority of $2/3$ and so forth. For example, a Papal Conclave has to choose the next pope with a majority of at least $2/3$. The voting process continues infinitely until the required majority is reached. Here, the voters' degree of indifference to time plays a major role. 
For example, Kweik \citeyearpar{Kwiek2014} has shown that when all voter preferences are known in advance,  the outcome may be predicted, and it favors the voter with the highest indifference to time.

Reijngoud et al. \citeyearpar{reijngoud2012voter} analyzed the iterated voting model from different angles: how the amount of information received in one single poll influences the manipulated behavior of the voter, and how repeated polls affect the manipulated behavior of all voters.
Br{\^a}nzei et al. \citeyearpar{branzei2013bad} studied the different influence of strategic voting on the main voting methods. They defined the \textit{additive Price Of Anarchy} ($PoA^+$) factor that measures the differences between truthful voting and worse-case manipulated voting. Though the general concept of the price of anarchy historically had other interpretations and uses (see, e.g., \cite{koutsoupias1999worst,papadimitriou2001algorithms}), it is the additive form that best befits our setting. Thus, $PoA^+$ is one of the measures we use in our experimental evaluation.


\subsection{Deadline}
A feature that distinguishes our framework from other iterative voting processes is the \emph{deadline}. Informally, the deadline is a time limit on the voting process. Bargaining models often have a deadline; however, they do not usually require a consensus or the selection of one alternative only. Rather, the outcome may be a compromise among the most-preferred outcomes by all participants. Deadlines are defined as ``fixed time limits that end a negotiation''~\citep{Ma1993,Moore2004}. Experimental studies of bargaining deadlines have shown that the majority of agreements are obtained in the final seconds before the deadline \citep{FERSHTMAN1993306,KOCHER2006375}. Even complete information does not speed agreement much, contrary to what might be expected~\citep{Roth2016}; moreover, agents are convinced that revealing their deadlines may decrease their payoffs, contrary to what is proven theoretically \citep{MOORE2004125}. 
The concession rate increases as the deadline approaches \citep{Cramton1992,Lim1994,KOCHER2006375}. If the deadline is soon enough, then the agents converge almost immediately, but if the deadline is long enough away, there probably appears a delayed equilibrium \citep{Gale1995}. 
 If the deadlines are different for the agents and are common knowledge for everyone, then the theory predicts an inmediate equilibrium \citep{Kwiek2014}, yet some experiments show that there will be delays \citep{Hurkens2015}.
However, there is a  probability that no agreement will be reached before the deadline; depending on the context, more than half \citep{Ma1993} or at least one third \citep{MERLO2004192} of all negotiations may fail. Nonetheless, a moderate deadline has a positive effect on the outcome~\citep{Moore2004}.
Our model does not encompass negotiation or bargaining.


 \subsection{Human Behavior in Voting Games}
On-line platforms such as Doodle allow people to submit their votes on the problem in question, to view the votes of other group members, and to change their votes if they wish.
Zou et al. \citeyearpar{zou2015strategic} studied strategic voting behavior in Doodle polls, and how the knowledge of current voters' preferences affected results. Interestingly, people tend to change their vote preferences in light of the votes of their peers. In our model,  voter preferences are private, and voters can only see intermediate aggregated results.

\cite{meir2020strategic} performed a study of human behavior in online voting. Similarily to us, their setting is of an iterative voting game, under the plurality voting rule.
They classified voters into three distinct types, two of which are not strategic and one that will perform straightforward strategic moves. However their setting does not contain a deadline. Moreover, while they provide valuable insights on human voting patterns, they do not compare their framework with a rational strategic model, so the question of whether humans and bots act alike remained unanswered.
 
\section{Model}\label{sec:model}

In this section we formally model a deadline-bounded, iterative voting process.
We concentrate on an extension of the Majority rule, i.e., on plurality voting with a threshold, leaving other voting rules for future work. 
The threshold determines the number of voters that need to vote for the same candidate for that candidate to win. The threshold is termed {\em tight} if it is equal to the number of voters, i.e., a unanimous consensus vote is required, and is termed {\em relaxed} otherwise.

The process begins when each voter reveals her most preferred alternative. The preferences are aggregated using plurality voting. If a consensus is not instantly reached, a process of vote alterations begins. 
Formally, let $V$ be a set of $n$ voters choosing 
from a set, $C$, of $m+1$ alternatives in an iterative process with a \emph{deadline} of $\tau$ steps. At any intermediate stage of the voting process, $t\in[0:\tau]$ will denote the remaining number of steps before the process must stop.
We denote by $\default\in C$ the default alternative, and $C^+=C\setminus\{\default\}$ the set of \emph{valid alternatives}. Notice that $|C^+|=m$.
Each voter is characterised by a \emph{truthful preference} $a_i\in\mathcal{L}(C)$, where $\mathcal{L}(C)$ is the space of complete and non-reflexive orderings over the set of alternatives $C$. Generally, we write $a_i(c,c')$ if voter $i$ prefers $c$ to $c'$. However, when it is beneficial to visually underline the relative preference of two options, we will also resort to the notation $c\succ_i c'$. We assume that $a_i(c,\default)$ for all $i\in V$ and $c\in C^+$, i.e., the default alternative is the worst option from the point of view of all voters. 
At the beginning of the process, when $t=\tau$, every voter $i\in V$ casts a ballot $b_i^t \in C$ that reveals her most preferred candidate.
The collection of all ballots, $\vecb^t=(b_1^t,\ldots,b_n^t)\in C^n$, is a \emph{ballot profile}.
Given a ballot profile, the \emph{score} of each candidate $c$ is the sum of votes for this candidate:
$\score_c(\vecb^t)=|\{i\in V | b_i^t=c\}|$. Please notice that individual candidate scores are scalars, i.e., $\score_c(\vecb^t)\in\mathcal{N}$.

A \emph{score vector} is a collection of scores of all \emph{valid} candidates
$ \score(\vecb^t)=$ \\ $(\score_{c_1}(\vecb^t),\ldots,\score_{c_m}(\vecb^t))\in\mathcal{N}^m. $
For convenience, a shorthand $\vecs^t=(s_1^t,\ldots,s_m^t) = \score(\vecb^t)$ is used, and we omit the time superscript to denote an arbitrary score vector.
The score vector is public knowledge to all voters.

Possible outcomes at time $t$ are computed using a \emph{Multi-stage Defaulted Voting Rule (MDVR)}, that estimates whether any non-default candidates have enough time to gather sufficient support to cross the majority threshold. MDVR explicitly focuses on iterative ballot modifications, and accounts for the possibility that the score of a candidate may change. 
Formally,
an MDVR, denoted $\calF$, maps a score vector at time $t$ to a subset of all candidates (including the default option), and has the form $\calF:\Delta\times[0:\tau]\rightarrow 2^C$, where $\Delta$ is the space of all feasible score vectors.

In this paper we investigate two MDVRs: the \emph{Iterative Majority (IMaj)}, and its special sub-case, \emph{Iterative Unanimity (IUn)}.
In the following we will assume that a majority threshold is $\sigma>\frac{n}{2}$, and naturally extend this to the unanimous threshold of $\sigma = n$.
Let $\widehat{W}\subseteq C^+$ denote the set of all valid candidates whose 
score difference from the winning majority threshold is bounded by the time until the deadline. That is:
$\widehat{W}(\vecs,t)=\left\{c\in C^+\ |\ \sigma-s_c<t+1\right\}$. Intuitively, these are the candidates that may still gather winning support if enough voters begin to favour them within the remaining time until the deadline. Throughout the paper we will term $\widehat{W}=\widehat{W}(\vecs,t)$ \emph{the set of possible winners}.\footnote{Note the difference from the classical concept of possible winners, which refers to expansions of partial preference ballots (see e.g., \citep{xia_conitzer_2011}).} Notice that the time index $t$ here means that this set is calculated {\em after} voters have expressed their vote alterations at time-slice $t$.
Thus the rule of IMaj is defined as:
$$\calF^{IMaj}_\sigma(\vecs,t)= \begin{cases}\{\default\}& if~ \widehat{W}=\emptyset\\\widehat{W}&otherwise\end{cases}
$$

Once the deadline is reached, $\calF(\vecs,0)$ is a singleton containing either (i) a valid candidate with the highest score ({\bf the winner}), or (ii) the default candidate $\default$ ({\bf default}).
Notice that the MDVR 
for Iterative Unanimity is $\calF^{IUn}=\calF^{IMaj}_n$.

Prior to the deadline, however, the set of possible winners is subject to the manipulative behaviour by voters. Specifically, at any time point $t\in[0:\tau]$ prior to the deadline, having calculated and observed the set of possible winners, each voter considers how a change in her ballot may influence the set calculation at the next time point. I.e, each voter decides whether to change her vote and produces a new ballot $b_i^t\in C$, thus (re)stating her vote at time $t$. Assuming, as we do here, that all voters are selfish and myopic leads to a particular pattern of behaviour w.r.t this ballot selection. Namely, if at time $t$ the set of possible outcomes includes more than one valid candidate, a voter will choose her ballot to support the more favoured candidates over the less preferred ones. In fact, we presume that a utility function can be defined to express the beneficial effects of a ballot alteration on the set of possible winners, and that all voters myopically seek a ballot that maximises this utility. We formalise this utility point of view as follows.


{\bf The voter's best possible outcome:} For any non-empty $W\subset C$ the best alternative in $W$ w.r.t.~the voter's truthful preferences $a_i$ is denoted $\topc_i(W)\in W$. That is, $w$ is the best possible outcome for voter $i$ if voter $i$ prefers it over any other candidate in the possible outcome set: $w=\topc_i(W)$ if and only if for all $c\in W\setminus\{w\}$ holds $a_i(w,c)$.

{\bf The voter's utility function:} Each voter has a utility function that matches a score vector at time $t$ to the voter's utility: $u_i:\Delta\times[0:\tau]\rightarrow\mathbb{R}$. We assume that the utility function is consistent with the voter's truthful preferences $a_i$ for any $t \in [0: \tau]$. We will particularly emphasise the classes of \emph{lazy consistent} and \emph{proactive consistent} utility functions. Intuitively, a lazy consistent utility function drives the voter to change her ballot only if this creates a set of possible outcomes, while a proactive consistent function would also induce ballot change if the score of the best possible outcome can be improved. Formally, these utility classes are defined as follows.

\begin{definition}\label{def:lazy}
A utility function $u_i$ is \emph{lazy consistent} if for all $\vecs,\vecs'\in\Delta$ and $t\in[0:\tau]$ the following condition holds:
$$u_i(\vecs,t)>u_i(\vecs',t)\iff a_i(w,w'),$$
where $w = \topc_i(\calF(\vecs,t))$ and $w'=\topc_i(\calF(\vecs',t))$.
\end{definition}

\begin{definition}\label{def:act}
A utility function $u_i$ is \emph{proactive consistent} if for all $\vecs,\vecs'\in\Delta$ and $t\in[0:\tau]$  holds the condition that:
$$u_i(\vecs,t)>u_i(\vecs',t)\iff a_i(w,w')\lor \left( (w=w')\land (s_w>s'_w)\right)$$
where $w = \topc_i(\calF(\vecs,t))$ and $w'=\topc_i(\calF(\vecs',t))$.
\end{definition}

We further assume that all utility functions are homogeneously either lazy consistent or proactive consistent. Notice that, although here we only use $\calF^{IMaj}$ and $\calF^{IUn}$, CUDs can be naturally extended to more general forms of MDVRs. It is also important to note that the use of cardinal utilities has some formal benefits. Although it is possible to describe both lazy and proactive voter behaviour using only ordinal preferences, that leads to formal descriptions of choosing the best course of actions, or of reasoning about preference alterations, becoming vividly distinct for the two behaviours. The use of utilities allows us to avoid this complication, creating an elegant, uniform formal treatment of our model under various voter behaviour characteristics. 

We must underline, however, that we do not assume knowledge of the voter's utility function or its properties beyond consistency. In fact, we treat the voter's utility as a subjective valuation that is private to the voter. The only use for the voter's utility we have is the convenience of formal presentation.

Now, it must be noted that for $\calF^{IMaj}$ and $\calF^{IUn}$ the definitions of lazy and proactive utilities do not differ in their behaviour if $\calF(\vecs,t)=\{\default\}$. That is, once MDVR provides the default as the only possible outcome, neither lazy nor proactive behaviour dictates any change. Therefore, we concentrate our discussion on the effect of these behaviours in those cases where $\calF(\vecs,t)\subseteq C^+$, that is the rule produces a set of {\em possible winners} $\widehat{W}\neq\emptyset$ as its set of possible outcomes. In fact, the proof of Theorem~\ref{thm:cuds_stop} will later show that strategic considerations by voters can only be effective in those cases where $\calF(\vecs,t)\subseteq C^+$ and a possible winner set $\widehat{W}$ is involved.

The following example illustrates the voting procedure.

\begin{example}Example of the voting procedure. 
\label{example:voting_model} 

Let there be three voters, $V=\{1,2,3\}$ and three alternatives $C^+=\{a,b,c\}$. Assume the strictest voting rule, unanimity, i.e., $\sigma=n$. The time until the deadline is $\tau=2$; the voters are required to reach a consensus in three time periods, from time $t=\tau$ until $t=0$. We assume that the three voters have the following preferences:


voter 1: $a \succ_1 b \succ_1 c$;

voter 2:  $b \succ_2 c \succ_2 a$;

voter 3:  $c \succ_3 a \succ_3 b$.

 \textbf{Stage $t=2$.} The voters cast their ballots and reveal their (truthful) most preferred alternative, $b{^\tau}=(b_1^{\tau},b_2^{\tau},b_3^{\tau})=(a,b,c)$. Thus the scores are $s^2=(s^2_a,s^2_b,s^2_c)=(1,1,1)$.

In order to win, an alternative must receive all $\sigma=3$ votes. At the moment each alternative has only one vote. There are two stages ahead, hence any alternative can gather the two remaining votes. Therefore at this stage all the three alternatives $\{a,b,c\}$ are possible winners, $\widehat{W}(s^2,2)=\{a,b,c\}$.

\textbf{Stage $t=1$.} Each voter decides whether she wishes to change her vote.
Note that the voters consider their options before they know if they will be chosen to cast their ballots. This can be viewed as answering the question: ``what will I vote if I am picked to vote?''. Each voter asks herself this question and decides whether she wishes to have a possibility to change her ballot. In the current example, the voters know that someone must change the ballot, or else the default alternative will be chosen (and this default is the worst outcome for all voters). The voter does not know if the other voters will agree to change their ballot, and since she wishes to avoid the default alternative from being chosen, she will agree to change her ballot.

For instance, consider voter 1. She has 3 possible actions: either she stays with $a$ ($b_1^1=a$), or she changes to $b$ or $c$ ($b_1^1=b$ or $b_1^1=c$). If she declares that she wishes to change, she might be randomly picked to do so. In this case, if she is picked to change at the current stage, she can predict the consequences of each of these possible votes, as Table \ref{tab:v1tau} illustrates.

If she decides to keep her vote for $a$ (row 2 of Table \ref{tab:v1tau}), then the scores of the alternatives will be $sc(b^1)=(s_a^1,s_b^1,s_c^1)=(1,1,1)$ and there will be no possible winner, because  $n-s_x^1\geq\tau-1+1$ for any alternative $x \in C$.

If voter 1 decides to change her ballot, and vote $b$ instead of $a$ (row 3 of Table \ref{tab:v1tau}),  the scores are $sc(b^1)=(s_a^1,s_b^1,s_c^1)=(0,2,1)$. Since $n-s_b\leq\tau-1+1$, alternative $b$ is a possible winner.

 \begin{table}[H]\caption{The possible decisions of voters 1, 2 and 3 at stage $t=1$. Each voter can vote for either $a$, $b$ or $c$ (column 2). The corresponding scores for these choices appear in column 3. The possible winners following these choices appear in column 4.}
 \label{tab:v1tau}
 \begin{center}

 \begin{tabular}{|c |c |c| c| c|}\hline
 \rule{0cm}{0.5cm}
 Voters &$b^1_i$ & $(s_a^1,s_b^1,s_c^1)$ & $\widehat{W}(s^1,1)$&\\
 \hline
 \rule{0cm}{0.5cm}
 voter 1& $a$ & (1,1,1) & $\{\}$&\\
 & $b$ & (0,2,1) & $\{b\}$& $\leftarrow$  prefers to switch to $b$\\
 & $c$ & (0,1,2) & $\{c\}$&\\
 \hline
 voter 2& $a$ & (2,0,1) & $\{a\}$&\\
 & $b$ & (1,1,1) & $\{\}$&\\
 & $c$ & (1,0,2) & $\{c\}$&$\leftarrow$   prefers to switch to $c$\\
 \hline
 voter 3& $a$ & (2,1,0) & $\{a\}$& $\leftarrow$  prefers to switch to $a$\\
 & $b$ & (1,2,0) & $\{b\}$&\\
 & $c$ & (1,1,1) & $\{\}$&\\
 \hline
 \end{tabular}
 \end{center}
 \end{table}

If at this stage voter 1 decides to switch to $c$ (row 4 of Table \ref{tab:v1tau}), then the scores are $(s_a^{\tau-1},s_b^{\tau-1},s_c^{\tau-1})=(0,1,2)$, and $c$ would be a possible winner. Since the utility function is consistent with the truthful preference, voter 1 prefers to switch to $b$. This is true both for lazy consistent and proactive voters.

The same argument can be applied to voters 2 and 3. Therefore, all three voters ``raise their hands'' and wish to change their votes. Possible changes are gathered in Table~\ref{tab:v1tau}.

We assume that voter 1 is randomly picked to change her vote. Accordingly, $b^1=(b^1_1,b_2^1,b_3^1)=(b,b,c)$ and $s^1=(s_a^1,s_b^1,s_c^1)=(0,2,1)$, which makes $b$ the only possible winner.

\textbf{Stage $t=0$.}
Voters 1 and 2 have no reason to change their vote since, as is shown in Table \ref{tab:v12t0}, whatever they vote, they cannot be better off. Voter 3 can be better off by switching to $b$, as is illustrated in Table \ref{tab:v12t0} since, if she votes for $b$, the ballot would be $b^0=(b_1^0,b_2^0,b_3^0)=(b,b,b)$ and the scores would be $s^0=(s_a^0,s_b^0,s_c^0)=(0,3,0)$, which makes $b$ the only possible winner. Consequently, voter 3 decides to change her vote from $c$ to $b$. Since she is the only one who wishes to change (that is, she is the only possible voter to be randomly picked), she changes to $b$.

 \begin{table}[H]\caption{Possible decisions of voters 1,2 and 3 at stage $t=0$. Voters 1 and 2 have no reason to change their vote. Voter 3 will prefer to change to $b$. }\label{tab:v12t0}
 \begin{center}
 \begin{tabular}{|c| c| c| c| l|}
 \hline
 \rule{0cm}{0.5cm}
 Voters &$b^0_i$ & $(s_a^0,s_b^0,s_c^0)$ & $\widehat{W}(s^0,0)$&\\
 \hline
 voter 1& $a$ & (1,1,1) & $\{\}$&\\
 & $b$ & (0,2,1) & $\{\}$&\\
 & $c$ & (0,1,2) & $\{\}$&\\
 \hline
 voter 2& $a$ & (1,1,1) & $\{\}$&\\
 & $b$ & (0,2,1) & $\{\}$&\\
 & $c$ & (0,1,2) & $\{\}$&\\
 \hline
 voter 3& $a$ & (1,2,0) & $\{\}$&\\
 & $b$ & (0,3,0) & $\{b\}$& $\leftarrow$  prefers to switch to $b$\\
 & $c$ & (0,2,1) & $\{\}$&\\
 \hline
 \end{tabular}
 \end{center}
 \end{table}

After voter 3 changes her vote, the voting ballot is $b^1=(b,b,b)$ and $s^1=(0,3,0)$. Since $s_b=\sigma=n$, alternative $b$ is chosen as the winner.

\end{example} 

The behaviours of \emph{lazy} and \emph{proactive} voters are different with respect to the stage at which they decide to change, and to which alternative they switch. The difference is illustrated by the following two examples for both voter types.

\begin{example} Example illustrating the behavior of lazy voters.
\label{example:proactive_lazy}

There are five \emph{lazy} voters, $V=\{1,2,3,4,5\}$ and there are four alternatives to be voted for, $C^+=\{a,b,c,d\}$. Assume unanimity, $\sigma=n=5$ and that the voters need to reach a consensus in five time periods, $\tau=4$. We assume that voters have the following preferences:


voter 1: $a \succ_1 b \succ_1 c \succ_1 d$;

voter 2: $a \succ_2 c \succ_2 b \succ_2 d$

voter 3: $b \succ_3 c \succ_3 a \succ_3 d$;

voter 4: $b \succ_4 a \succ_4 c \succ_4 d$;

voter 5: $c \succ_5 b \succ_5 d \succ_5 a$

\textbf{Stage} $t=4$.
The voters reveal their truthfully most preferred alternative, hence, the voting ballot is $b^4=(b_1^4,b_2^4,b_3^4,b_4^4,b_5^4)=(a,a,b,b,c)$ and, therefore, the scores are $s^4=(s^4_a,s^4_b,s_c^4,s_d^4)=(2,2,1,0)$. The possible winners are $\widehat{W}(s^4,4)=\{a,b,c\}$.

\textbf{Stage} $t=3$.
Each voter needs to decide whether she wishes to change her vote. Voters 1--4 cannot be better off if they change their vote, since any change would exclude their top-choice from the set of possible winners. Therefore, they do not ``raise their hands'' in order to change their votes. This is illustrated in Table \ref{tab:v1-4t4}.

Consider voter 5. Whatever she votes, the set of possible winners consists only of $a$ and $b$ as is illustrated in Table~\ref{tab:v1-4t4}. Hence, by Definition \ref{def:lazy} voter 5, being a \emph{lazy voter}, prefers not to change her vote. 

 \begin{table} [H]\caption{Possible decisions of voters 1-5 at stage $t=3$. Voters 1-4 will not change their votes.}\label{tab:v1-4t4}
 \begin{center}
 \begin{tabular}{|c |c |c |c |l|}
 \hline
 \rule{0cm}{0.5cm}
 Voters &$b^3_i$ & $s^3$ & $\widehat{W}(s^3,3)$&\\
 \hline
 voter 1& $a$ & (2,2,1,0) & $\{a,b\}$& $\leftarrow$  prefers not to change\\
 & $b$ & (1,3,1,0) & $\{b\}$&\\
 & $c$ & (1,2,2,0) & $\{b,c\}$&\\
 & $d$ & (1,2,1,1) & $\{b\}$&\\
 \hline
 voter 2& $a$ & (2,2,1,0) & $\{a,b\}$& $\leftarrow$  prefers not to change\\
 & $b$ & (1,3,1,0) & $\{b\}$&\\
 & $c$ & (1,2,2,0) & $\{b,c\}$&\\
 & $d$ & (1,2,1,1) & $\{b\}$&\\
 \hline
 voter 3& $a$ & (3,1,1,0) & $\{a\}$&\\
 & $b$ & (2,2,1,0) & $\{a,b\}$& $\leftarrow$  prefers not to change\\
 & $c$ & (2,1,2,0) & $\{a,c\}$&\\
 & $d$ & (2,1,1,1) & $\{a\}$&\\
 \hline
 voter 4& $a$ & (3,1,1,0) & $\{a\}$&\\
 & $b$ & (2,2,1,0) & $\{a,b\}$& $\leftarrow$  prefers not to change\\
 & $c$ & (2,1,2,0) & $\{a,c\}$&\\
 & $d$ & (2,1,1,1) & $\{a\}$&\\
 \hline
 voter 5& $a$ & (3,2,0,0) & $\{a,b\}$&\\
 & $b$ & (2,3,0,0) & $\{a,b\}$&\\
 & $c$ & (2,2,1,0) & $\{a,b\}$& $\leftarrow$ prefers not to change\\
 & $d$ & (2,2,0,1) & $\{a,b\}$&\\
 \hline
 \end{tabular}
 \end{center}
 \end{table}

Hence, at stage $t=3$ the ballot is $b^3=(b_1^3,b_2^3,b_3^3,b_4^3,b_5^3)=(a,a,b,b,c)$, and alternatives' scores are $s^3=(s_a^3,s_b^3,s_c^3,s_d^3)=(2,2,1,0)$, which make alternatives $a$ and $b$ the only possible winners at the current stage.

\end{example}

\begin{example}\label{example:proactive_voters} Example illustrating the behavior of proactive voters.

Consider the same voters and preferences as in the previous example. When the voters are \emph{proactive} the result at \textbf{stage $t=4$} does not change; the possible winners are $\widehat{W}(s^4,4)=\{a,b,c\}$.

At \textbf{stage $t=3$} voters 1--4 cannot be better off if they change their vote, as is illustrated in Table~\ref{tab:v1-4t4}.

Consider voter 5. As  is illustrated in Table~\ref{tab:v5t4}, whatever she votes, the set of possible winners will be $\{a,b\}$. Since she is a \emph{proactive} voter, then, by Definition \ref{def:act}, given that she prefers $b$ to $a$, she ``raises her hand'' because she can be better off by voting for $b$. Being the only voter who wants to cast a ballot at the current stage, she is picked to change it.

 \begin{table} [H]\caption{Possible decisions of voter 5 at stage $t=3$. The voter will prefer to change her ballot.}\label{tab:v5t4}
 \begin{center}
 \begin{tabular}{|c |c |c |c |l |}
 \hline
 \rule{0cm}{0.5cm}
 Voters &$b^3_i$ & $s^3$ & $\widehat{W}(s^3,3)$&\\
 \hline
 voter 5& $a$ & (3,2,0,0) & $\{a,b\}$&\\
 & $b$ & (2,3,0,0) & $\{a,b\}$&$\leftarrow$ prefers to change\\
 & $c$ & (2,2,1,0) & $\{a,b\}$& \\
 & $d$ & (2,2,0,1) & $\{a,b\}$&\\
 \hline
 \end{tabular}
 \end{center}
 \end{table}

Hence, at stage $t=3$ the ballot is $b^3=(b_1^3,b_2^3,b_3^3,b_4^3,b_5^3)=(a,a,b,b,b)$, and alternatives have scores $s^3=(s_a^3,s_b^3,s_c^3,s_d^3)=(2,3,0,0)$, which makes alternatives $a$ and $b$ the only possible winners.

\end{example}

Having constructed some procedural intuition with Examples~\ref{example:voting_model} and~\ref{example:proactive_voters},
we can now summarise the progress of a CUD into a protocol.
Formally, a CUD iterative voting game proceeds as depicted in Protocol~\ref{cud_game}. Since score vectors simply accumulate ballots from $\vecb^t$, we use algebraic operations between a score vector and a ballot. That is, if $\vecs'=\vecs-c$ (respectively, $\vecs = \vecs+c$) then $s_k'=s_k$ for all $k\in C\setminus\{c\}$ and $s_c'=s_c-1$ (respectively, $s_c'=s_c+1$). In other words, ``adding'' a candidate to a score vector increases the score of that candidate by 1, while ``subtracting'' the candidate reduces it by 1.

 Protocol~\ref{cud_game} is parameterised by 
 the Multi-stage Defaulted Voting Rule 
 $\calF$ and 
 the deadline $\tau$. The game, therefore, proceeds as follows.
Iteratively, as long as the deadline has not been reached:
all voters calculate the current score vector (line 3).
If there is only one possible winner, $w$, a decision has been reached and the game ends (lines 4--5).
If the game continues, every voter calculates what is her best possible winner, given that the current score vector would be augmented by her vote alteration (line 7--8). If there are ties (i.e., if a few alternatives receive the same score), the voter selects the alternative that is ranked highest in her truthful preferences $a_i$.
Each voter decides whether she wants to change her vote, based on the possible winner set and her utility function. 
Voters who want to vote ``raise their hands'', i.e., are collected into a set $I$ (line 10).
A random voter is chosen from set $I$ (line 11).
The chosen voter casts her ballot (lines 12--15),
and the deadline is now one step closer (line 16).

Now, at first it may appear that Protocol~\ref{cud_game} is unstable, and has multiple issues. E.g., it may appear that the time can run out with more than one alternative remaining a possible winner, or that an ill-defined ballot $b^{-1}$ will be used in line 13 of the protocol at time $t=0$. However, Theorem~\ref{thm:cuds_stop}, presented in Section~\ref{sec:cud_theory}, shows that a CUD always terminates with a single alternative in $C$ declared a winner. We distinguish between two ways a CUD can terminate. We say that a CUD has {\bf converged}, if it terminates with a $w\in C^+$ declared as the winner; otherwise, the default alternative is chosen (the default is the worst option from the point of view of all voters).

\begin{algorithm}
\floatname{algorithm}{Game Protocol}
  \caption{Consensus Under Deadline\label{cud_game}, $CUD(\calF, \tau)$}
\begin{algorithmic}[1]
\Statex
\Statex \hspace{-6pt}\begin{tabular}{ll}
{\bf Input:} &MDVR $\calF$, deadline timeout $\tau$\\
&Set of voters $V$, set of alternatives $C$,\\
&Truthful profile $\veca$, and utilities $u_i$\\
{\bf Output:} &winner candidate (possibly default) $w\in C$\\
{\bf Initialise:} &Set $t\leftarrow\tau$, and $b_i^\tau\leftarrow\topc_i(C)$ for all $i\in V$\\\end{tabular}
\While {$t\geq 0$}
\State Ballots $b_i^t$ are declared\label{alg:line:ballot_declare}
\State All voters calculate $\vecs^t=\score(\vecb^t)$\label{alg:line:score_calc}
\If {$\calF(\vecs^t,t)=\{w\}$ for some $w\in C$}
\State \Return $w$ as the winner \label{alg:line:winner_stop}
\Comment Game stops
\EndIf
\For {$i\in V$}
\State $w_i\leftarrow\arg\max\limits_{c\in C}u_i(\vecs^t-b_i^t+c,t-1)$\label{alg:line:best_response} \Comment{Ties are broken by $a_i$}
\EndFor
\State $I\leftarrow\{i\in V| w_i\neq b_i^t\}$
\State $j\leftarrow Random(I)$\label{alg:line:random}\Comment {Random voter choice}
\For {$i\in V$}
\State $b_i^{t-1}\leftarrow b_i^t$
\EndFor
\State $b_j^{t-1}=w_j$
\Comment{Only $j$ is allowed to re-vote}
\State $t\leftarrow t-1$
\EndWhile
\end{algorithmic}
\end{algorithm}

\subsection{Quality of CUD outcomes}

In order to analyse the quality of the result of a CUD game, features of voting processes can be  adapted. One such feature is the \emph{additive Price of Anarchy ($PoA^+$)} \citep{branzei2013bad}. We adapt the $PoA^+$ to CUDs, requiring that at least some valid alternative can become a winner, as follows.
\begin{definition}\label{def:add_poa}
Let $\veca$ be the truthful profile of voters participating in a CUD, $\vecb$ a ballot profile induced by $\veca$ (i.e., $b_i=\topc_i(C)$), and $\vecs=\score(\vecb)$. Denote all valid candidates that the CUD may converge to by $\widehat{C}\subseteq C^+$. 
Then the CUD's \emph{additive Price of Anarchy} is\\
\centerline{$PoA^+(\veca) = \max\limits_{c\in C}s_c-\min\limits_{c\in\widehat{C}}s_c$}
\end{definition}

Namely, $PoA^+$ is the score of the least-preferred valid alternative that could become the winner of a CUD, subtracted from the score of the truthful winner. Notice that $PoA^+$ is well defined only for games where $\widehat{C}$ is not empty, i.e., there is at least one non-trivial conclusion to the voting process.\footnote{Here we follow well-established definitions of the additive Price of Anarchy for voting processes with restricted dynamics (see,  e.g.,~\cite{branzei2013bad,omprj_2015_aaai}).} 


\section{Theoretical Features of CUD}\label{sec:cud_theory}

We begin our theoretical analysis by showing that the CUD Protocol is stable, i.e., if the set of possible winners is not empty at the beginning then one single winner is found and the process cannot end in a default. In all our proofs time appears to be running backwards, as we measure the number of steps until the deadline. Thus, with every decision step taken by voters, time winds \emph{down} from $t=\tau$ to $t=0$.
    

\begin{theorem}\label{thm:cuds_stop}
In any $CUD(\calF^{IMaj}_\sigma, \tau)$ with $\sigma\in(\frac{n}{2},n]$ and consistent utility functions, either the set of possible winners at time $t=\tau$ is empty and the outcome will be set to the default alternative, or a valid (i.e., non-default) alternative becomes the winner at time $t\in[0:\tau]$.
\end{theorem}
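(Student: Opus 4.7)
The plan is to split on whether $\widehat{W}(\vecs^\tau,\tau)$ is empty at the initial time step. If it is empty, then $\calF^{IMaj}_\sigma(\vecs^\tau,\tau)=\{\default\}$ by definition of the rule, and the protocol returns the default at line~\ref{alg:line:winner_stop} on its very first iteration. For the rest of the proof I would assume $\widehat{W}(\vecs^\tau,\tau)\neq\emptyset$ and establish, by downward induction on $t$, the invariant that $\widehat{W}(\vecs^t,t)\neq\emptyset$ at every iteration of the while-loop actually executed.

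Once the invariant is in place, termination with a valid winner is essentially automatic. At $t=0$ the condition $\sigma-s_c<t+1$ collapses to $s_c\geq\sigma$, and since $\sigma>n/2$ at most one candidate can clear this bar, so $|\widehat{W}(\vecs^0,0)|\leq 1$. Combined with the invariant this gives exactly one possible winner, so $\calF$ is a singleton in $C^+$ and the loop halts on line~\ref{alg:line:winner_stop} returning a valid candidate. Termination may of course occur earlier the instant $\widehat{W}$ drops to a singleton, and since $\widehat{W}\subseteq C^+$ the returned candidate is valid in that case too.

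The substantive inductive step is to show that whenever $\widehat{W}(\vecs^t,t)\neq\emptyset$ and $|\widehat{W}(\vecs^t,t)|\geq 2$ (otherwise the loop has already stopped), we also have $\widehat{W}(\vecs^{t-1},t-1)\neq\emptyset$. I would branch on whether $\widehat{W}(\vecs^t,t-1)$ is already non-empty. If it is, pick any $c^\star$ in it; $c^\star$ can be knocked out only if the voter $j$ selected on line~\ref{alg:line:random} was voting for $c^\star$ and switches away. But in that situation the no-change option for $j$ already secures $\topc_j(\calF(\vecs^t,t-1))\in C^+$, and since the best response $w_j$ delivers at least as much utility as no change, consistency of $u_j$ forces $\topc_j(\calF(\vecs^t-b^t_j+w_j,t-1))\in C^+$ as well, so some valid candidate survives in $\widehat{W}(\vecs^{t-1},t-1)$. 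If instead $I=\emptyset$, then $\vecs^{t-1}=\vecs^t$ and $\widehat{W}(\vecs^{t-1},t-1)=\widehat{W}(\vecs^t,t-1)\neq\emptyset$ trivially.

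The harder sub-case is when $\widehat{W}(\vecs^t,t-1)=\emptyset$, i.e., holding scores fixed would yield the default at the next tick. Here the two constraints $c\in\widehat{W}(\vecs^t,t)$ and $c\notin\widehat{W}(\vecs^t,t-1)$ pin every $c\in\widehat{W}(\vecs^t,t)$ to $s^t_c=\sigma-t$ exactly; since $\sigma\leq n$ and $t\geq 1$, we have $s^t_c<n$ and some voter $i$ is not casting her ballot for $c$. Switching $i$ to $c$ raises $s_c$ to $\sigma-t+1$ and places $c$ in the possible-winner set at time $t-1$, and by both lazy and proactive consistency this outcome strictly beats the default for $i$, so $i\in I$ and in particular $I\neq\emptyset$. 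The main obstacle I anticipate is completing this argument for \emph{every} $i'\in I$ rather than just the convenient $i$ I constructed: one must observe that $i'\in I$ forces $u_{i'}(\vecs^t-b^t_{i'}+w_{i'},t-1)>u_{i'}(\vecs^t,t-1)$, and since the right-hand side here corresponds to the default, consistency implies $\topc_{i'}(\calF(\vecs^t-b^t_{i'}+w_{i'},t-1))\in C^+$. Consequently, whichever $j\in I$ the protocol picks on line~\ref{alg:line:random}, the resulting $\widehat{W}(\vecs^{t-1},t-1)$ is non-empty, closing the induction.
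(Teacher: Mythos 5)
Your proposal is sound in outline and, despite the repackaging as a downward induction, it is essentially the paper's argument: the paper proves the same invariant (the set of possible winners never becomes empty once it is non-empty) by contradiction at the first time step where it would fail, splits into the cases ``no voter changes'' and ``some voter changes'', and in both cases invokes exactly your mechanism --- consistency of the utility function forces any chosen action to leave a valid candidate in the next outcome set. Your extra observations (the immediate default return when $\widehat{W}(\vecs^\tau,\tau)=\emptyset$, and the fact that $\sigma>\frac{n}{2}$ makes $\widehat{W}(\vecs^0,0)$ at most a singleton) handle corner cases that the paper stipulates in the model description rather than re-proves, so nothing is lost there.

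One step, however, is misattributed and would fail if proved exactly as stated: in your hard sub-case you assert that $i'\in I$ \emph{by itself} forces $u_{i'}(\vecs^t-b^t_{i'}+w_{i'},t-1)>u_{i'}(\vecs^t,t-1)$. Under the protocol, $w_{i'}$ is an argmax with ties broken by $a_{i'}$, so a voter can land in $I$ through a utility-neutral change: if all of her options tie (e.g., a lazy voter whose earlier strategic switch left her ballot different from her truthful top, and whose single vote cannot alter the top of the outcome set), the tie-break selects her truthful top rather than her current ballot, putting her in $I$ with zero gain. The repair is already contained in your own construction: since the loop only reaches this point with $|\widehat{W}(\vecs^t,t)|\geq 2$ and every member of that set is pinned at $s^t_c=\sigma-t$, \emph{every} voter $i'$ fails to vote for at least one pinned candidate $c$, and switching to that $c$ places it in the possible-winner set at $t-1$, which strictly beats the default by consistency. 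Hence the maximum utility available to $i'$ strictly exceeds the no-change (default) utility, so $w_{i'}$ attains that strictly larger value and its outcome set has a valid top. With strictness derived this way --- from the existence of a strictly improving move for every voter, not from membership in $I$ --- your induction closes and the theorem follows.
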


Notice that Theorem~\ref{thm:cuds_stop} applies to both voter types of interest, lazy and proactive. In fact, all of our theoretical results are applicable to both of these types.
Now, having established that the algorithm always stops, we can study the  CUD process in detail. All proofs can be found in the appendix. 

First, we show that if a candidate is not a possible winner at stage $t$ (e.g., now), then he will never become a possible winner at any further stage $t+i$. 

\begin{lemma}\label{lemma:not_in_PW_before}
Let $c \notin \widehat{W}$ at step $t$, then $c \notin \widehat{W}$ at any step $t'<t$.
\end{lemma}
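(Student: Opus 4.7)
The plan is to translate the possible-winner condition into a concrete score inequality and then exploit the fact that scores change by at most one per time step. Unpacking the definition, $c\notin\widehat{W}$ at step $t$ means $\sigma - s_c^t \geq t+1$, equivalently $s_c^t \leq \sigma - t - 1$. We want to derive the same inequality at any earlier-indexed step $t'<t$ (recall that time winds down, so $t'<t$ is chronologically later).

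First I would state the key monotonicity observation: between step $t$ and step $t'$, exactly $t-t'$ voting rounds are executed, and by the CUD Protocol (line 14) at most one voter re-votes per round, changing $s_c$ by at most $+1$. Consequently
\[
s_c^{t'} \leq s_c^{t} + (t-t').
\]
Plugging in the assumed bound $s_c^t \leq \sigma - t - 1$ yields
\[
s_c^{t'} \leq (\sigma - t - 1) + (t - t') = \sigma - t' - 1,
\]
which is exactly $\sigma - s_c^{t'} \geq t'+1$, i.e. $c\notin\widehat{W}$ at step $t'$.

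There is no real obstacle here; the proof is essentially a one-line accounting argument once the possible-winner set is rewritten as a score inequality. The only subtlety worth flagging is the direction of time (a smaller $t$ corresponds to a later round), and the fact that whenever $t'<t$ the game either is still running (so the protocol's single-vote-per-round rule applies literally) or has already terminated earlier (in which case the statement is vacuous for steps beyond termination). I would include a brief sentence noting this interpretation so the monotonicity argument is unambiguous.
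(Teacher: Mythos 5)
Your proof is correct and follows essentially the same route as the paper's: rewrite the possible-winner condition as a score inequality and use the fact that the score of $c$ can increase by at most one per round, so $s_c^{t'} \leq s_c^t + (t-t')$, which transfers the bound from step $t$ to step $t'$. If anything, your version is slightly more careful than the paper's (which writes a strict inequality $s_c^{t'} < s_c^t + (t-t')$ where a non-strict one is what the counting argument actually gives), but the argument is identical in substance.
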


The next lemma suggests that a candidate that receives an additional vote from one time step to the next must be in the set of possible winners.
\begin{lemma}\label{lemma:if_1_vote_more}
If $s_c^{t+1}<s_c^t$, then $c \in \widehat{W}(s^t,t)$.
\end{lemma}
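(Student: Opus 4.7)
The plan is to argue by contradiction, exploiting the fact that the voter $j$ who performed the switch selected $c$ as her arg-max. Because Protocol~\ref{cud_game} allows at most one re-vote per tick, $s_c^{t+1} < s_c^t$ forces $s_c^t = s_c^{t+1} + 1$, identifying a unique voter $j$ who switched from $b_j^{t+1}$ to $c$ during iteration $t+1$, with $w_j = c = \arg\max_{c' \in C} u_j\bigl(\vecs^{t+1} - b_j^{t+1} + c', t\bigr)$. Suppose for contradiction that $c \notin \widehat{W}(\vecs^t, t)$; then $s_c^{t+1} \leq \sigma - t - 2$, so $c$ is also absent from $\widehat{W}(\vecs^{t+1}, t)$. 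Writing $W = \widehat{W}(\vecs^t, t)$ and $W^+ = \widehat{W}(\vecs^{t+1}, t)$, the two sets differ at most in the single element $b_j^{t+1}$, and since the switch only decreases $s_{b_j^{t+1}}$ we have $W \subseteq W^+$ with $W^+ \setminus W \subseteq \{b_j^{t+1}\}$.

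From $W \subseteq W^+$ one concludes $\topc_j(\calF(\vecs^{t+1}, t)) \succeq_j \topc_j(\calF(\vecs^t, t))$, so by Definitions~\ref{def:lazy} and~\ref{def:act} the act of staying with $b_j^{t+1}$ is at least as good as switching to $c$. I would then case-split. In the strict sub-cases---either $W = \emptyset$ while $W^+ \neq \emptyset$, or $b_j^{t+1}$ is the $a_j$-top of $W^+$---staying is strictly better, immediately contradicting $w_j = c$. In the remaining equality sub-cases, I would invoke a strictly dominating third action: because iteration $t+1$ did not halt at line~\ref{alg:line:winner_stop}, $|\widehat{W}(\vecs^{t+1}, t+1)| \geq 2$, and some $c'' \neq b_j^{t+1}$ has $s_{c''}^{t+1} \geq \sigma - t - 1$; switching to this $c''$ brings it into $\widehat{W}\bigl(\vecs^{t+1} - b_j^{t+1} + c'', t\bigr)$ with a score one higher than in $\vecs^t$. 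For proactive voters, this strictly improves the utility by Definition~\ref{def:act}, again contradicting $w_j = c$.

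The main obstacle will be the residual lazy-voter equality sub-case, in which $\topc_j$ coincides across staying, switching to $c$, and switching to any $c'' \in W$: lazy consistency constrains only strict comparisons, so all three choices share a common utility value. Closing the argument here requires reading the $a_j$-based tie-breaking in line~\ref{alg:line:best_response} as favoring the status-quo ballot $b_j^{t+1}$ over any $c \notin \widehat{W}(\vecs^t, t)$---equivalently, that a rational lazy voter will never waste a move on an unwinnable candidate when an equally valued candidate inside $\widehat{W}$ is available. Every other sub-case collapses to strict dominance, either by staying or by switching to a candidate that actually remains a possible winner at time $t$.
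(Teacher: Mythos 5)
Your core strategy---contradiction via the voter who switched to $c$, the inclusion $\widehat{W}(\vecs^t,t)\subseteq\widehat{W}(\vecs^{t+1},t)$, and an appeal to Definitions~\ref{def:lazy} and~\ref{def:act}---is the same as the paper's. But the paper closes the argument in a much more direct way: it invokes the modelling principle, stated explicitly inside its proofs of Theorem~\ref{thm:cuds_stop} and of this lemma, that \emph{a voter changes her ballot only if the change strictly increases her utility}. Given the inclusion and $c\notin\widehat{W}(\vecs^t,t)$, the switch to $c$ strictly improves neither a lazy utility (the top of a subset cannot be better than the top of the superset) nor a proactive utility (no candidate of $\widehat{W}(\vecs^t,t)$ gains score under the switch, so the bonus clause of Definition~\ref{def:act} cannot fire), hence no voter of either type would have made it---contradiction, no case analysis, no tie-breaking, no third action. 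Your proposal instead works with the literal semantics of line~\ref{alg:line:best_response} (arg-max with ties broken by $a_j$), which obligates you to show that voting for $c$ cannot be the selected $w_j$, and that is where the proof does not close.

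Two concrete problems arise on that route. First, a repairable imprecision: in the proactive equality case, switching to an \emph{arbitrary} $c''$ with $s_{c''}^{t+1}\geq\sigma-t-1$ does not strictly improve proactive utility; if $c''$ is not the voter's top of the resulting set, the top and its score are unchanged. You must take $c''=\topc_j(\widehat{W}(\vecs^{t+1},t))$, which stays the top while its score rises (similarly, your claim that the sub-case where $b_j^{t+1}=\topc_j(W^+)$ is ``strictly better to stay'' holds for proactive but not for lazy voters). Second, and fatally for the route as written: the lazy tie case is not an obstacle to be argued around---under the literal tie-breaking rule the lemma is simply \emph{false}. Take $n=21$, $\sigma=11$, $t+1=3$, scores $s_x=9$, $s_z=8$, $s_a=s_b=2$, and a lazy voter $j$ currently voting for $b$ with truthful order $a\succ_j x\succ_j z\succ_j b\succ_j\cdots\succ_j\default$. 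The game does not stop ($\widehat{W}(\vecs^{t+1},3)=\{x,z\}$), and every possible action of $j$ leaves $x$ as her top possible winner at time $t=2$, so all actions are utility-tied for a lazy voter; tie-breaking by $a_j$ then yields $w_j=a$, and if $j$ is drawn, $s_a$ increases while $a\notin\widehat{W}(\vecs^t,2)$. So the missing ingredient is not a cleverer case split but exactly the strict-improvement (status-quo-favouring) assumption that the paper asserts inside its proof; you correctly name it as ``required,'' but you cannot derive it from the protocol text, and without it your proof is incomplete precisely where the paper's assumption does all the work.
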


The next two lemmas prove that a voter 
always votes for her top choice in the set of possible winners; and, if a voter switches her support to another candidate, then this newly supported candidate has at least the same score (number of votes) as the previously supported candidate. 

\begin{lemma}\label{lemma:vote_for_top}
If a voter $i$ at the time step $t$ votes for candidate $c \in \widehat{W}(s^t,t)$, then  $c=\topc_i(\widehat{W}(s^t,t))$.
\end{lemma}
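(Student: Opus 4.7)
My plan is to prove the lemma by backward induction on $t$, running from $t=\tau$ down to $0$. The base case $t=\tau$ should be immediate, since every voter's initial ballot is $\topc_i(C)$, her overall top preference, and if this ballot happens to lie in $\widehat{W}(\vecs^\tau,\tau)\subseteq C$ it is automatically the $\topc_i$ of that subset as well.

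For the inductive step, I would fix a voter $i$ with $c=b_i^t\in\widehat{W}(\vecs^t,t)$ and split on whether $i$ was randomly picked to re-vote at the $(t+1)\to t$ transition. In the passive case $i$ was not picked, so $b_i^t=b_i^{t+1}=c$. I would apply Lemma~\ref{lemma:not_in_PW_before} to conclude that the possible-winner set is monotone nonincreasing in real time, i.e.\ $\widehat{W}(\vecs^t,t)\subseteq\widehat{W}(\vecs^{t+1},t+1)$, and in particular $c$ lies in the larger set. The inductive hypothesis then gives $c=\topc_i(\widehat{W}(\vecs^{t+1},t+1))$, and since the top of a preference order on a set is also the top of any subset still containing it, $c=\topc_i(\widehat{W}(\vecs^t,t))$.

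In the active case $i$ was picked, so $c=w_i$, the utility-maximiser computed in line~\ref{alg:line:best_response} of the Protocol with ties broken by $a_i$. I would argue by contradiction: assume $c\neq\topc_i(\widehat{W}(\vecs^t,t))=:w$, so $w\succ_i c$, and compare two hypothetical ballots $i$ could have cast at that transition. Casting $c$ produces the score vector $\vecs^t$ itself, whose possible-winner set has top $w$ by definition; casting $w$ produces $\vecs^t-c+w$, which differs only in that $s_c$ drops by one and $s_w$ rises by one, so a direct check of the threshold $\sigma-s_x<t+1$ shows its possible-winner set differs from $\widehat{W}(\vecs^t,t)$ at most by removing $c$ and still has $w$ on top. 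Both ballots therefore yield the same best possible winner $w$, but the ballot $w$ yields a strictly larger $s_w$. For a proactive consistent utility (Definition~\ref{def:act}) this makes $w$ strictly better than $c$, contradicting the maximality of $w_i=c$; for a lazy consistent utility (Definition~\ref{def:lazy}) the two utilities coincide, so the tie-breaking by $a_i$ would pick $w$ (or an even more preferred option) over $c$, once again contradicting $w_i=c$.

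The part I expect to require the most care is the threshold bookkeeping in the active case: one must verify that swapping $i$'s vote from $c$ to $w$ does not accidentally introduce some third candidate $v\succ_i w$ into $\widehat{W}$ that would spoil the comparison. This is handled cleanly by the observation that the swap alters only $s_c$ and $s_w$, so the only candidates whose $\widehat{W}$-membership can change are $c$ itself (which may drop out) and $w$ (which can only be reinforced, and was already present). That structural observation is what lets the argument close uniformly for both the lazy and the proactive behavioural classes, differing only in whether the contradiction comes from tie-breaking or from strict utility dominance.
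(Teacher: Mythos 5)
Your proof is correct and takes essentially the same route as the paper's: backward induction on $t$ (the paper phrases it as a minimal counterexample in time), a case split on whether voter $i$ re-voted at the $(t+1)\to t$ transition, Lemma~\ref{lemma:not_in_PW_before} giving $\widehat{W}(\vecs^t,t)\subseteq\widehat{W}(\vecs^{t+1},t+1)$ in the passive case, and optimality of the choice in line~\ref{alg:line:best_response} of the Protocol in the active case. The one point where you are in fact more careful than the paper is the lazy voter in the active case: the paper asserts that both lazy and proactive consistent utilities from voting $c$ are strictly lower than from voting the preferred possible winner, whereas for a lazy voter (Definition~\ref{def:lazy}) the two resulting score vectors have the same top of their possible-winner sets and hence equal utility, so the contradiction must come from the tie-breaking by $a_i$ --- exactly as you argue.
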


\begin{lemma}\label{lemma:if_change_vote_score_changes}
If there is a voter that changes her ballot at time $t$ from voting for $c$ to voting for $c'$, then $s_c^{t+1} \leq s_{c'}^{t+1}$.
\end{lemma}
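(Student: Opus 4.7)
The strategy is to identify the ``new top possible winner'' $w'$ after the switch and then extract the score bound directly from the threshold defining $\widehat{W}$. Set $\vecs=\vecs^{t+1}$ (scores immediately before the change) and $\vecs'=\vecs-c+c'=\vecs^t$ (scores immediately after, since only voter $i$ changes). Let $w=\topc_i(\calF(\vecs,t))$ and $w'=\topc_i(\calF(\vecs',t))$. Since voter $i$ preferred to switch from $c$ to $c'$, the protocol ensures $u_i(\vecs',t)>u_i(\vecs,t)$; by the consistency of $u_i$ (Definitions~\ref{def:lazy} and~\ref{def:act}), either (i) $w'\succ_i w$, or (ii) $w=w'$ and $\vecs'_w>\vecs_w$.

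The first substantive step is to show $w'=c'$ in every branch. Case (ii) is immediate: only $c'$'s score strictly increases under the switch, so $\vecs'_w>\vecs_w$ forces $w=c'$ and hence $w'=w=c'$. For case (i) with $\widehat{W}(\vecs,t)\neq\emptyset$ (so $w$ is the $\succ_i$-maximum of $\widehat{W}(\vecs,t)$): if $w'\notin\{c,c'\}$ then $\vecs'_{w'}=\vecs_{w'}$ places $w'$ in $\widehat{W}(\vecs,t)$, yielding $w\succeq_i w'$ and contradicting $w'\succ_i w$; and $w'=c$ forces $\vecs'_c\geq\sigma-t$, i.e., $\vecs_c\geq\sigma-t+1$, again placing $c\in\widehat{W}(\vecs,t)$ for the same contradiction. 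If instead $\widehat{W}(\vecs,t)=\emptyset$ (so $w=\default$ and $w'\neq\default$), every candidate except possibly $c'$ still fails the threshold in $\vecs'$, so $w'=c'$. In every subcase $c'\in\widehat{W}(\vecs',t)$, which by the threshold definition yields $\vecs_{c'}\geq\sigma-t-1$.

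The second step bounds $\vecs_c$. If $c\notin\widehat{W}(\vecs^{t+1},t+1)$ then $\vecs_c<\sigma-t-1\leq\vecs_{c'}$ and we are done. Otherwise Lemma~\ref{lemma:vote_for_top} gives $c=\topc_i(\widehat{W}(\vecs^{t+1},t+1))$; since $\vecs_{c'}\geq\sigma-t-1$ also places $c'\in\widehat{W}(\vecs^{t+1},t+1)$, we obtain $c\succeq_i c'$, and because $c\neq c'$ in fact $c\succ_i c'$. Coupled with $c'\succeq_i w$ (from $c'=w'\succ_i w$ in (i) or $c'=w$ in (ii)), this yields $c\succ_i w$; hence $c\notin\widehat{W}(\vecs,t)$ (otherwise $w\succeq_i c$ by maximality), so $\vecs_c<\sigma-t\leq\vecs_{c'}+1$, i.e., $\vecs_c\leq\vecs_{c'}$. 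The main obstacle is the first step: pinning down $w'=c'$ requires a careful enumeration of alternative targets and uses crucially that only the scores of $c$ and $c'$ differ between $\vecs$ and $\vecs'$, together with the maximality of $w$ in $\widehat{W}(\vecs,t)$.
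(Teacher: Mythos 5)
Your proof is correct, and it rests on the same two pillars as the paper's own argument: (a) the candidate gaining the vote must itself be a possible winner at the new time slice, and (b) Lemma~\ref{lemma:vote_for_top} applied to the abandoned candidate $c$. The organization, however, differs. The paper proceeds by contradiction: assuming $s_c^{t+1}>s_{c'}^{t+1}$, it cites Lemma~\ref{lemma:if_1_vote_more} to place $c'\in\widehat{W}(s^t,t)$, concludes via the threshold arithmetic that both $c$ and $c'$ lie in $\widehat{W}(s^{t+1},t+1)$, and then uses Lemma~\ref{lemma:vote_for_top} to obtain $c\succ_i c'$, so that neither a lazy nor a proactive voter would have made the switch---a behavioral contradiction. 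You argue directly: your first step (pinning down $w'=c'$) is in effect a self-contained re-derivation of Lemma~\ref{lemma:if_1_vote_more}, which you never invoke, and your second step converts membership and non-membership in the possible-winner sets into explicit score inequalities rather than deriving a contradiction. What your route buys is care at the edges: you explicitly treat the subcase $\widehat{W}(\vecs,t)=\emptyset$ (default as the only outcome), the subcase $c\notin\widehat{W}(s^{t+1},t+1)$, and the proactive branch $(w=w')\land(s'_w>s_w)$, all of which the paper's shorter proof glosses over by leaning on its earlier lemmas; what it costs is length and some redundancy, since Lemma~\ref{lemma:if_1_vote_more} is stated immediately before this one and could be cited. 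One shared idealization deserves mention: you assert that the protocol ensures $u_i(\vecs',t)>u_i(\vecs,t)$ for a voter who changes her ballot; strictly speaking, the tie-breaking by $a_i$ in line~\ref{alg:line:best_response} of Protocol~\ref{cud_game} can place a voter in $I$ without a strict utility gain, but the paper's own proofs (e.g., of Lemma~\ref{lemma:if_1_vote_more}, which states that ``a voter changes her vote only if it increases her utility'') make exactly the same assumption, so this is not a gap relative to the paper.
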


Now we can place a condition on the game features that ensures that the default alternative is never set. 

\begin{cor}\label{thm:finer_bound}
Let $\veca=(a_1,\ldots,a_n)$ be the truthful profile, let $\tau$ be the deadline time, and let $\vecb$ be the ballot profile induced by $\veca$, i.e., $b_i=\topc_i(C)$. CUD stops with some $w\in C^+$  if and only if there is an alternative $c\in C^+$ so that $\score_c(\vecb)\geq \sigma-\tau$.
\end{cor}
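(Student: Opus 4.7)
The plan is to first reformulate the score inequality in the corollary as a statement about the possible-winner set at the initial time slice, and then read off both directions from Theorem~\ref{thm:cuds_stop} and Lemma~\ref{lemma:not_in_PW_before}. Unpacking the definition $\widehat{W}(\vecs,t)=\{c\in C^+\mid \sigma-s_c<t+1\}$, the condition $c\in\widehat{W}(\vecs,\tau)$ becomes $s_c\geq\sigma-\tau$. Since the ballot profile $\vecb$ induced by $\veca$ coincides with $\vecb^\tau$ (the initial ballots in Protocol~\ref{cud_game}), the statement ``there exists $c\in C^+$ with $\score_c(\vecb)\geq\sigma-\tau$'' is exactly equivalent to ``$\widehat{W}(\vecs^\tau,\tau)\neq\emptyset$''. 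All that remains is to show that CUD terminates with a valid winner if and only if this initial set is non-empty.

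For the reverse direction ($\Leftarrow$), if $\widehat{W}(\vecs^\tau,\tau)\neq\emptyset$, then Theorem~\ref{thm:cuds_stop} directly yields that a valid alternative becomes the winner at some time $t\in[0:\tau]$, so the CUD stops with some $w\in C^+$, as required.

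For the forward direction ($\Rightarrow$), suppose the CUD stops with some winner $w\in C^+$ at stage $t^*\in[0:\tau]$. By the exit condition of Protocol~\ref{cud_game} (line~\ref{alg:line:winner_stop}), we must have $\calF(\vecs^{t^*},t^*)=\{w\}$, and since $w\in C^+$ this forces $w\in\widehat{W}(\vecs^{t^*},t^*)$. Applying the contrapositive of Lemma~\ref{lemma:not_in_PW_before}---recalling that in the paper's convention $t$ decreases as the game progresses---$w$ must have been a possible winner at every earlier stage as well, in particular at $t=\tau$. Hence $w\in\widehat{W}(\vecs^\tau,\tau)$, which via the translation above gives $\score_w(\vecb)\geq\sigma-\tau$, witnessing the desired alternative $c=w$.

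The only delicate point I anticipate is arithmetic bookkeeping: the strict inequality ``$\sigma-s_c<t+1$'' in the definition of $\widehat{W}$ must be rewritten carefully as $s_c\geq\sigma-t$, and one must remember that the protocol's clock runs backwards so that Lemma~\ref{lemma:not_in_PW_before} propagates membership in $\widehat{W}$ \emph{from later} stages back to $t=\tau$. Otherwise, the corollary is essentially a rephrasing of Theorem~\ref{thm:cuds_stop} and requires no new combinatorial work.
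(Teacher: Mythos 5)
Your proof is correct, and it rests on the same two ingredients as the paper's: the ($\Leftarrow$) direction is identical (non-empty $\widehat{W}(\vecs^\tau,\tau)$ plus Theorem~\ref{thm:cuds_stop}), and the ($\Rightarrow$) direction ultimately reduces to the one-vote-per-step count. The packaging of the forward direction differs, though. The paper argues inline: the winner $w$ ``achieved $\sigma$ votes in no more than $\tau$ steps,'' gaining at most one vote per step, hence $\score_w(\vecb)\geq\sigma-\tau$. You instead note that the exit condition of Protocol~\ref{cud_game} forces $w\in\widehat{W}(\vecs^{t^*},t^*)$ at the stopping stage, and propagate that membership back to $t=\tau$ via the contrapositive of Lemma~\ref{lemma:not_in_PW_before}. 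Since that lemma is itself proved by exactly the same counting argument, the two proofs are mathematically the same at bottom; but your version is more modular and, in fact, slightly more careful than the paper's wording: the protocol can stop as soon as $\widehat{W}$ becomes a singleton, before $w$ literally accumulates $\sigma$ votes, so the paper's premise that $w$ ``achieved $\sigma$ votes'' need not hold in all terminating runs, whereas your premise --- that $w$ is a possible winner at the stopping stage --- always does. What the paper's inline count buys is self-containment; what your route buys is reuse of an already-proved lemma and immunity to that edge case.
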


Corollary~\ref{thm:finer_bound} essentially provides a finer bound on what the initial scores must look like, so that CUD converges. Intuitively, it says that for a non-default alternative to become the declared winner, there must be enough time for it to gather additional support to achieve the majority threshold. However, Corollary~\ref{thm:finer_bound} does not guarantee that a \emph{particular} alternative will be declared as the winner. For such a guarantee, a much more stringent condition must be required of $\tau$ and $n$, as the following theorem states.

\begin{theorem}\label{thm:single_winner}
Let $\veca=(a_1,\ldots,a_n)$ be the truthful profile, let $\tau$ be the deadline time, and let $\vecb$ be the ballot profile induced by $\veca$, i.e., $b_i=\topc_i(C)$). If there is an alternative $c\in C^+$ so that $\score_c(\vecb)\geq\max\left\{\left\lfloor\frac{n}{2}\right\rfloor+1,\sigma-\tau\right\}$ then CUD terminates with $c$ as the winner.
\end{theorem}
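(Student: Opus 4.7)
The plan is to establish the theorem in two stages. First I would prove a retention invariant — that $c$'s score never drops below its initial value — and second, combine this invariant with Corollary~\ref{thm:finer_bound} to rule out any other candidate being declared the winner. Set $k := \score_c(\vecb)$, so by hypothesis $k \geq \lfloor n/2 \rfloor + 1$ and $k \geq \sigma - \tau$; note in particular that $k \geq \lfloor n/2 \rfloor + 1$ yields $2k > n$ for both parities of $n$.

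For the retention claim, I would do a backward induction on $t$, showing $s_c^t \geq k$ for every $t \in [0:\tau]$. The base case $t = \tau$ is immediate. For the inductive step, at most one voter alters her ballot between $t+1$ and $t$, so the only way $s_c$ can decrease is if that voter switches from $c$ to some $w \neq c$. By Lemma~\ref{lemma:if_change_vote_score_changes} such a switch requires $s_c^{t+1} \leq s_w^{t+1}$. Combined with the induction hypothesis $s_c^{t+1} \geq k$ this gives $s_w^{t+1} \geq k$; but ballot scores always sum to $n$, so $s_w^{t+1} \leq n - s_c^{t+1} \leq n - k$. Together we obtain $k \leq n - k$, i.e., $2k \leq n$, contradicting the majority bound. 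Hence no voter ever abandons $c$, and the invariant propagates.

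For the conclusion, Corollary~\ref{thm:finer_bound} applies because $k \geq \sigma - \tau$, so CUD terminates with some valid winner $w \in C^+$. Suppose toward a contradiction that $w \neq c$. At the termination time $t^*$ we have $\calF(\vecs^{t^*}, t^*) = \{w\}$, so $w \in \widehat{W}(\vecs^{t^*}, t^*)$ and $c \notin \widehat{W}(\vecs^{t^*}, t^*)$. Thus $s_w^{t^*} \geq \sigma - t^*$ while $s_c^{t^*} < \sigma - t^*$. The retention invariant gives $s_c^{t^*} \geq k$, so $k < \sigma - t^*$, i.e., $t^* < \sigma - k$. Simultaneously, $s_w^{t^*} \leq n - s_c^{t^*} \leq n - k$, which combined with $s_w^{t^*} \geq \sigma - t^*$ forces $t^* \geq \sigma - n + k$. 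Chaining these, $\sigma - n + k \leq t^* < \sigma - k$, whence $2k < n$ — again contradicting the majority bound. Therefore $w = c$.

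The main obstacle I anticipate is identifying the right quantity to track. Attempting to argue directly in terms of which voters are currently supporting $c$, and whether $c$ stays in the possible-winner set $\widehat{W}$, leads to an intricate case analysis distinguishing lazy and proactive reactions around the threshold $s_c^t = \sigma - t$. The retention bound $s_c^t \geq k$ sidesteps all of this: Lemma~\ref{lemma:if_change_vote_score_changes} together with the strict majority $2k > n$ does the heavy lifting, after which the remainder of the proof reduces to elementary arithmetic involving the two threshold inequalities supplied by the hypothesis.
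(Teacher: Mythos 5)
Your proof is correct and takes essentially the same route as the paper's: both arguments rest on Lemma~\ref{lemma:if_change_vote_score_changes} combined with the strict-majority bound $2\,\score_c(\vecb) > n$ to show that $c$ can never lose a supporter, and then rule out any candidate other than $c$ being the unique possible winner at termination. Your backward-induction score invariant and explicit appeal to Corollary~\ref{thm:finer_bound} merely make explicit what the paper's terser ``first defection'' formulation and final score-comparison leave implicit.
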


The following example demonstrates that the bound of Theorem~\ref{thm:single_winner} is tight. That is, if the score were any lower than Theorem~\ref{thm:single_winner} suggests, then, for at least some truthful profiles, a CUD would have more than one alternative that could be declared the winner.

\begin{example}\label{ex:single_winner_tight}
Let the number of voters $n$ be even, and the number of (non-default) alternatives $m\geq 2$. Furthermore, we assume that $\tau>\frac{n}{2}$. Construct the truthful profile $\veca$ so that $\topc_i(C)=\begin{cases}c_1& i\in[1:\frac{n}{2}]\\ c_2&i\in[\frac{n}{2}+1:n]\end{cases}$, where $c_1,c_2\in C^+$. All other non-default candidates may appear in $a_i$ in any order. Then, both $c_1$ and $c_2$ can possibly be declared as the winner in a CUD.
\end{example}

Having dealt with the characterisation of valid alternatives that are either guaranteed to, or can potentially, be declared a winner, we can exploit this knowledge to place some bounds on the additive Price of Anarchy for CUDs.

\begin{theorem}\label{thm:poa_bounds}
Let $\veca$ be the truthful profile of voters participating in a CUD. Assuming that it is well-defined for the CUD instance, the following bounds can be placed on the additive Price of Anarchy, $PoA^+$, depending on the ratio of the deadline timeout $\tau$ and the number of voters $n$:

\begin{enumerate}
\item If $\tau \leq \sigma - \left\lfloor\frac{n}{2}\right\rfloor$, then \begin{equation}PoA^+(\veca)=0.\label{eq:poa_bound_1}\end{equation}
\item If $\sigma-\left\lfloor\frac{n}{2}\right\rfloor<\tau<\sigma$, then \begin{equation}PoA^+(\veca) \leq \left\lfloor\frac{n}{2}\right\rfloor+\tau -\sigma.\label{eq:poa_bound_2}\end{equation}
\item If $\tau\geq \sigma$, then
\begin{equation}PoA^+(\veca) \leq \left\lfloor\frac{n}{2}\right\rfloor-1.\label{eq:poa_bound_3}\end{equation}
\end{enumerate}
\end{theorem}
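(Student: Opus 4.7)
My plan is to split the analysis according to whether the truthful maximum score $\max_{c\in C^+}\score_c(\vecb)$ exceeds the simple-majority value $\lfloor n/2\rfloor$, and then refine each regime using the three cut-offs on $\tau$ stated in the theorem.

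The easy case is $\max_c\score_c(\vecb)\geq\lfloor n/2\rfloor+1$. A unique truthful winner $c^*$ must then exist, since no two alternatives can simultaneously exceed $n/2$. Because the theorem presupposes that $PoA^+$ is well-defined, $\widehat{C}$ is non-empty, and Corollary~\ref{thm:finer_bound} forces $\score_{c^*}(\vecb)\geq\sigma-\tau$ as well. Hence $\score_{c^*}(\vecb)\geq\max\{\lfloor n/2\rfloor+1,\sigma-\tau\}$, so Theorem~\ref{thm:single_winner} pins $\widehat{C}=\{c^*\}$ and $PoA^+=0$, which satisfies each of the three bounds simultaneously.

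It remains to deal with the regime $\max_c\score_c(\vecb)\leq\lfloor n/2\rfloor$. By Lemma~\ref{lemma:not_in_PW_before}, every $c\in\widehat{C}$ must already lie in $\widehat{W}$ at $t=\tau$, hence $\score_c(\vecb)\geq\sigma-\tau$. For item~(1), the hypothesis $\tau\leq\sigma-\lfloor n/2\rfloor$ gives $\sigma-\tau\geq\lfloor n/2\rfloor$, and combining with $s_c\leq\lfloor n/2\rfloor$ pins every $c\in\widehat{C}$ to score exactly $\lfloor n/2\rfloor$; since the global maximum is also attained at this value, $PoA^+=0$. For item~(2), the same two inequalities yield $PoA^+\leq\lfloor n/2\rfloor-(\sigma-\tau)=\lfloor n/2\rfloor+\tau-\sigma$ directly.

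The only step that needs real work is item~(3), where $\sigma-\tau\leq 0$ reduces the previous lower bound to the vacuous $s_c\geq 0$. I would establish separately the auxiliary claim that no alternative $c$ with $\score_c(\vecb)=0$ can belong to $\widehat{C}$, by induction over the backwards-counted clock. For the inductive step, any voter who switches to $c$ is, immediately before the switch, voting for some $c'$, and therefore $s_{c'}\geq 1$; Lemma~\ref{lemma:if_change_vote_score_changes} then demands $s_{c'}\leq s_c$, contradicting $s_c=0$ (which persists by the induction hypothesis). This upgrades the bound to $\min_{c\in\widehat{C}}s_c\geq 1$, and together with $\max_c s_c\leq\lfloor n/2\rfloor$ yields $PoA^+\leq\lfloor n/2\rfloor-1$. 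The main obstacle is precisely this inductive argument, because it must be run uniformly against \emph{every} realisation of Protocol~\ref{cud_game}'s random choices rather than a single trajectory, and one must verify that the voter-switching dynamics cannot conspire across rounds to lift a zero-score candidate out of $\{s_c=0\}$.
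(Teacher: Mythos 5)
Your proof is correct, and it reaches the paper's bounds by a different decomposition, although it leans on the same two pillars as the paper: Theorem~\ref{thm:single_winner} and Lemma~\ref{lemma:if_change_vote_score_changes}. The paper argues each $\tau$-regime separately: item (1) by an ad hoc enumeration of the score configurations in which alternatives can reach $\left\lfloor\frac{n}{2}\right\rfloor$ votes; item (2) by contradiction (if the bound failed, the truthful plurality winner would have at least $\left\lfloor\frac{n}{2}\right\rfloor+1$ votes and Theorem~\ref{thm:single_winner} would force it to be the unique final winner); item (3) by combining Theorem~\ref{thm:single_winner} with the observation, justified in one sentence via Lemma~\ref{lemma:if_change_vote_score_changes}, that a candidate with zero initial support can never gain a vote. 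You instead split once, globally, on whether the truthful maximum exceeds $\left\lfloor\frac{n}{2}\right\rfloor$: the high regime is disposed of uniformly for all three items by Theorem~\ref{thm:single_winner}, and in the low regime the inequality $\min_{c\in\widehat{C}}\score_c(\vecb)\geq\sigma-\tau$ (your Lemma~\ref{lemma:not_in_PW_before} argument, which is precisely the ``only if'' half of Corollary~\ref{thm:finer_bound}) turns the paper's case-(2) contradiction into a one-line subtraction and makes the paper's case-(1) enumeration unnecessary. What your route buys is uniformity and directness; what the paper's buys is that each bound's mechanism is exhibited separately, including the explicit dynamics in case (1). Your item (3) rests on the same idea as the paper's, but your backwards-in-time induction is actually a more careful version of the paper's informal claim. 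Finally, the ``main obstacle'' you flag is not a real obstacle: Lemma~\ref{lemma:if_change_vote_score_changes} constrains an arbitrary ballot change at an arbitrary step, hence it holds along every realisation of the randomness in Protocol~\ref{cud_game}; since every trajectory starts from the same truthful score vector, running your induction separately on each trajectory already rules out any ``conspiracy'' that could lift a zero-score candidate, so the argument as you sketched it is complete.
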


\begin{lemma}\label{lemma:poa_bounds_tight}
The last two bounds in Theorem~\ref{thm:poa_bounds} are tight. For all $\tau$ and $n$ that satisfy the conditions of Equations~\ref{eq:poa_bound_2} and~\ref{eq:poa_bound_3}, there exists a truthful profile $\veca$ such that the corresponding bound holds as an equality.
\end{lemma}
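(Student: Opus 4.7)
The plan is to exhibit, for each of the two regimes, an explicit truthful profile $\veca$ on which the bound of Theorem~\ref{thm:poa_bounds} is attained with equality. Two observations from the preceding results guide the construction. First, by Theorem~\ref{thm:single_winner} the truthful plurality winner's initial score can be at most $\lfloor n/2 \rfloor$ without forcing a unique CUD winner, so we aim for exactly this value. Second, the minimum initial score an alternative can have and still belong to $\widehat{C}$ is $\sigma-\tau$ in regime~(\ref{eq:poa_bound_2}) (directly from the definition of $\widehat{W}$) and $1$ in regime~(\ref{eq:poa_bound_3}), because by Lemma~\ref{lemma:if_change_vote_score_changes} a candidate with no initial supporter can never attract a switch. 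In both regimes we therefore aim for a distinguished truthful winner $c^*$ with $s_{c^*}=\lfloor n/2\rfloor$ and a distinguished alternative $c'$ at the lowest admissible score.

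Fix three distinguished non-default alternatives $c^*,c',c_3$. For regime~(\ref{eq:poa_bound_2}) we would assign $\lfloor n/2\rfloor$ voters the preference $c^*\succ c'\succ c_3\succ\cdots\succ\default$; $\sigma-\tau$ voters the preference $c'\succ c^*\succ c_3\succ\cdots\succ\default$; and the remaining $\lceil n/2\rceil+\tau-\sigma$ voters the preference $c_3\succ c'\succ c^*\succ\cdots\succ\default$. For regime~(\ref{eq:poa_bound_3}) we use the same template, but with a single voter in the second group (since $\sigma-\tau\le 0$ there) and a correspondingly enlarged third group. Both profiles are well-defined by the parameter ranges in the statement of Theorem~\ref{thm:poa_bounds}.

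The substantive part of the plan is then to trace a run of the CUD protocol on this profile and argue that $c'\in\widehat{C}$. The argument proceeds in two phases. First, the third-group voters rank $c'$ immediately after their own favorite $c_3$ and strictly above $c^*$; once $c_3$ leaves the possible-winner set (which happens once $t$ has decreased enough that $\sigma-s_{c_3}\ge t+1$), the best response under either a lazy or a proactive consistent utility becomes to switch to $c'$. Once sufficiently many of them have done so, the first-group voters see that keeping $c^*$ yields the default while switching to $c'$ yields a win for $c'$; under a consistent utility this is a strict improvement, so they too switch. Composing these switches, $c'$ crosses the majority threshold before $t=0$, so $c'\in\widehat{C}$. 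A direct substitution then gives $PoA^+(\veca)=\lfloor n/2\rfloor-(\sigma-\tau)$ in regime~(\ref{eq:poa_bound_2}) and $PoA^+(\veca)=\lfloor n/2\rfloor-1$ in regime~(\ref{eq:poa_bound_3}), matching the respective bounds.

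The main obstacle is making this trace go through for every admissible combination of $n$, $\sigma$ and $\tau$, and uniformly over the lazy and proactive behaviour models. The delicate point is parameter ranges in which $c_3$ would remain in $\widehat{W}$ for several initial rounds: then the third group is reluctant to switch early. I would handle this by refining the construction, spreading the third group's weight across several singleton-support alternatives $c_3,c_4,\ldots$ whose individual scores are too small for membership in $\widehat{W}$, so that each such voter is immediately driven toward $c'$. A short case analysis of the tie-breaking rule used in the best-response selection of the protocol then covers the remaining edge configurations (for instance, the boundary $\tau=\sigma-1$ where the two bounds coincide).
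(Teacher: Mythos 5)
Your proposal is correct and, once you apply your own refinement of spreading the third group over singleton-support alternatives, it coincides with the paper's construction: the paper likewise uses a plurality winner with $\left\lfloor\frac{n}{2}\right\rfloor$ supporters, a low-score candidate (with $\sigma-\tau$ supporters in the first regime, one supporter in the second) ranked second by everyone else, and filler voters each topping a distinct candidate who cascade toward the low-score candidate, with the random choice among hand-raisers then giving that candidate positive probability of winning. Note only that the paper's proof also restricts to $n$ odd and $\sigma-\tau\geq 2$ to make the trace clean, which matches the edge-case caveat you flag at the end.
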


\section{Experimental Features of CUD}\label{sec:simulations}
The experimental setting varied according to the following parameters: 
    \begin{itemize}
        \item \textbf{Voter types: } Lazy or proactive voters. 
        \item \textbf{Data sets: } A total of eight data sets: four simulated data sets and four real-world ones. The simulated data sets contain all permutations over 5-8 candidates (i.e., impartial culture for 5-8 candidates). These data sets are titled ``Uniform5", ``Uniform6", ``Uniform7" and ``Uniform8". The four real-world data sets are: the Sushi data set (5000 voters, 10 candidates) \citep{kamishima_etal_2005}, the T-shirt data set (30 voters, 11 candidates), the Courses 2003 data set (146 voters, 8 candidates) and the Courses 2004 data set (153 voters, 7 candidates). The three latter data sets are taken from the Preflib library~\citep{mattei_walsh_2013}. 
        \item \textbf{Number of voters:} We examined a fixed number of $n$ ($n=10$, $20$ or $30$) voters.
        \item \textbf{Voter preferences sets:} We created 20 random sets of voter preferences by sampling with return from each data set. For example, for $n=10$ the same experiment was conducted 20 times, each time with another set of preferences for each of the voters. 
        \item \textbf{Time until the deadline: } For 10 voters: $\tau = {2,...,11}$. For 20 voters: $\tau = {2,...,21}$. For 30 voters: $\tau = {2,...,31}$.
    \end{itemize}
    
    The various combinations of the parameters result in 9600 different \textit{experimental settings}. Each experimental setting contains some randomness: it may be that several voters wish to update their vote in a given round. In such a case, according to Protocol 1, only one of these voters is chosen at random. In order to verify that this randomness does not effect the stability of the experimental setting, we generated 30,000 election runs \textit{for each} experimental setting. 
    We examined:
    \begin{itemize}
    \item \textbf{Convergence:} The minimum deadline time $t$ which is necessary for \textit{all} experiment runs to converge in a given experimental setting.  
    \item  \textbf{Required number of vote changes:} How many vote changes are required for the process to converge. Here, we provide averages and standard deviations over the 20 random voter preferences sets (which are already averages of the 30,000 runs). \item \textbf{The Additive Price of Anarchy ($PoA^+$):} 
    Recall that $PoA^+$ is defined as the score of the least-preferred valid alternative that could become the winner of a CUD, subtracted from the score of the truthful winner (Definition~\ref{def:add_poa}). In order to receive an approximation of $PoA^+$ for each experimental setting, we looked at the winners in the 30,000 experiment runs. Out of these winners, we found the winner that is the least preferred alternative. The least preferred alternative found in these 30,000 runs was subtracted from the truthful winner. To avoid overloading notation, we refer to this $PoA^+$ approximation simply as $PoA^+$. Note that $PoA^+$ can be computed only for processes where some winner is found, i.e., processes that converged. For processes where the winner is not found, the reader should refer to the convergence rate.  
    \end{itemize}

Our theoretical results are relevant for all majority thresholds $\sigma\in(\frac{n}{2},n]$. However, to investigate the finer features of CUDs, we concentrate on Unanimity, fixing $\sigma$ to the number of voters $n$. Even though it means that we experimentally study an extreme CUD case, fixing $\sigma$ allows us to exclude it as a free parameter, and concentrate on studying more complex game features, such as the Additive Price of Anarchy.

In order to conclude which voter type performs best over multiple data sets, we followed a robust non-parametric procedure proposed by Garc\'ia et al. \citeyearpar{Garcia10}. This procedure allows us to avoid the assumption that the performance difference between voter types is normally distributed, making it more adequate than the standard t-test. We first used the Friedman Aligned Ranks test in order to reject the null hypothesis that all voter types perform the same. This was followed by the Bonferroni-Dunn test to find whether one of the types performs significantly better than the other.

\subsection{Simulation Results}\label{subsec:sim_results}

{\bf Convergence:} As Theorem~\ref{thm:finer_bound} indicates, convergence always occurs when there is enough time until the deadline to allow voters to change their vote (i.e., when the number of iterations is larger than the number of voters). Interestingly, we find that in all experimental settings the process converges faster than the worst-case convergence time, described by Theorem~\ref{thm:finer_bound}. The convergence of Uniform data sets is slower than that of real-world data sets, but still faster than that of the theoretical bound. 
    This suggest two things:
    (a) Real-world voter preferences are a-priori aligned, in a sense, and it is easier to reach consensus than with an arbitrary preference profile.
    (b) Worst-case profiles are extremely rare even in impartial cultures. 
Table~\ref{table1} shows the minimum deadline time $t$ which is necessary for \textit{all} experiments to converge.
In our experiments, this time was confirmed to be identical for proactive and lazy voters.
In the real-world data sets, the process seems to converge faster when there are fewer candidates. The exact impact of the candidate number on convergence should be examined on a wider variety of data sets; we leave this for future research.
The more voters, the longer it takes the process to converge. This is illustrated in Figure~\ref{convergence1}, on the Courses 2004 data set and T-shirt data set 
(results are similar for other data sets). The figures differ from one another because the number of candidates in each of the data sets is different.

\begin{table}
\caption{Process convergence times: The minimum deadline time $t$ which is necessary for all experiments to converge. The times are identical for both proactive and lazy voters. Each column represents a different number of voters, and each row a different data set. The number of candidates in the data set is indicated in brackets. For example, for the Courses 2004 data set (row 2), for 10 voters (column 1), all of the experiments converged when the deadline $\tau \geq 6$.}
\label{table1}
\begin{small}
  \begin{center}
    \begin{tabular}{|c|c|c|c|}\hline
      \textbf{Data set $\Downarrow\backslash\Rightarrow$ Number of Voters}&\textbf{10}&\textbf{20}&\textbf{30}\\\hline
      \textbf{Uniform5 (5)}&7&16&23\\
      \textbf{Uniform6 (6)}&8&16&24\\
      \textbf{Uniform7 (7)}&8&16&24\\
      \textbf{Uniform8 (8)}&8&16&25\\
      \textbf{Courses 2004 (7)}&6&13&19\\
      \textbf{Courses 2003 (8)}&7&14&23\\
      \textbf{Sushi (10)}&8&14&24\\
      \textbf{T-Shirts (11)}&8&14&23\\\hline
    \end{tabular}
  \end{center}
\end{small}
\end{table}

{\bf Required number of vote changes:}
Table~\ref{table2} shows, for different data sets and varying number of voters, the normalized average of vote changes required to reach a consensus. The presented results include one simulated data set (Uniform5) and the four real-world data sets. Full results are presented in Appendix \ref{appendix:uniform}. Proactive voters require more vote changes, this is especially true in the real-world data sets.

\begin{table} 
\caption{Number of vote changes: the average  and standard deviation of the number of vote changes required to reach a consensus.}
\label{table2}
\begin{small}
\begin{center}
   \begin{tabular}{| c| c c| c c| }
    \hline
    & \multicolumn{2}{c |}{\textbf{10 voters}} & \multicolumn{2}{c |}{\textbf{20 voters}} \\
     \textbf{Data set} & \textbf{lazy} & \textbf{proactive} & \textbf{lazy} & \textbf{proactive}  \\
     \hline
    \textbf{Uniform5} & $5.06  \pm  2.95$ & $5.15  \pm 3.02  $ & $10.74  \pm 6.02  $ & $10.83 \pm 6.1   $   \\
    \textbf{Courses 2003} & $4.98 \pm 2.45 $   & $6.46  \pm 3.54 $  &$ 11.31 \pm 5.45  $ & $13.09  \pm 6.72  $  \\
    \textbf{Courses 2004} & $3.55 \pm 2.05  $ & $5.73  \pm 3.78 $ & $7.5  \pm 4.15  $&$ 11.19  \pm 7.04 $  \\
    \textbf{Sushi} & $5.71 \pm 2.9 $ & $ 6.75 \pm 3.7 $ &$ 11.11  \pm 6.32 $ & $12.42 \pm 7.31 $ \\
    \textbf{T-shirts} & $ 5.66 \pm 2.5  $ & $ 6.91 \pm 3.45  $ & $12.38 \pm 5.97  $ & $13.76 \pm 6.93   $   \\
    \hline
    \end{tabular}
    
    \begin{tabular}{| c| c c|   }
    \hline
    & \multicolumn{2}{c |}{\textbf{30 voters}}   \\
     \textbf{Data set} & \textbf{lazy} & \textbf{proactive}   \\
     \hline
    \textbf{Uniform5}   & $19.46 \pm 7.6     $& $19.49 \pm $ 7.63 \\
    \textbf{Courses 2003}   &$ 18.7 \pm 6.7 $    &$ 21.17 \pm 8.22$\\
    \textbf{Courses 2004} &  $11.98 \pm  6.37 $ & $ 17.13 \pm 10.25$\\
    \textbf{Sushi}  & $18.61 \pm 8.14 $ & $20.71  \pm 9.52 $\\
    \textbf{T-shirts}   & $18.74 \pm 9.43    $ & $ 20.43 \pm 10.6 $ \\
    \hline
    \end{tabular}
\end{center}
\end{small}
\end{table}

According to the Friedman test, there is a significant difference between the number of vote changes performed by purely lazy and purely proactive voter sets.
According to the Bonferroni-Dunn test, proactive voters require a significantly higher number of vote changes in order to reach consensus.
Notice, however, that the fact that proactive voters need more vote changes to converge does not affect the convergence time. This aligns well with our theoretical findings. In particular, Theorem~\ref{thm:single_winner} and Corollary~\ref{thm:finer_bound} relate protocol convergence and the deadline independently of voter types. 

{\bf Additive price of anarchy:} The upper bound to the Additive Price of Anarchy was computed as the plurality score of the least-preferred candidate that was elected to be a unanimous winner in one of the 30,000 experiments, subtracted from the plurality score of the truthful winner. Note that in order to determine the real price of anarchy we must pursue every possibility. As 30,000 experiments were used, our result is actually an upper bound to the price of anarchy. 
For example, consider 12 voters and 4 candidates and the following scores at the beginning of the process: 
$s_{c_1}^{\tau}=2, s_{c_2}^{\tau}=6, s_{c_3}^{\tau}=1, s_{c_4}^{\tau}=3$. The truthful plurality winner is $c_2$ with a score of 6. If in one of the experiments $c_3$ is the unanimous winner, the additive price of anarchy is 5. If $c_3$ does not win in any of the experiments, but in some of them $c_1$ wins, the price is 4. 
Table~\ref{table3} shows the average and standard deviations of the upper bounds for the additive price of anarchy for 10 voters for different initial $\tau$ (time until the deadline). These are averages (and standard deviations) over 20 different voter preference sets. 
Results for 20 and 30 voters as well as for all simulated data sets (Uniform6, Uniform 7 and Uniform 8) are provided in Appendix \ref{appendix:uniform}.
The $PoA^+$ is identical for both lazy and proactive voters. 
As can be inferred from the table, the $PoA^+$ is low, meaning that in most cases, the plurality winner was the winner. 
\begin{table}
\caption{The average and standard deviations of the upper bounds for the additive price of anarchy for 10 voters for different initial $\tau$ (time until the deadline). The $PoA^+$ is identical for both lazy and proactive voters. }
\label{table3}
\begin{small}
\begin{center}
  \begin{tabular}{|L|c|c|c|c|c|}\hline
      \textbf{Time until the deadline}&\textbf{Uniform5}&\textbf{Courses 2003}&\textbf{Courses 2004}&\textbf{Sushi} &\textbf{T-Shirts}\\\hline
    \textbf{2}   &$0 \pm 0$  & $0 \pm 0$  &$0 \pm 0$  &$0 \pm 0$  &$0 \pm 0$  \\
    \textbf{3}    &$0 \pm 0$  & $0 \pm 0$  &$0 \pm 0$  &$0 \pm 0$  &$0 \pm 0$   \\
    \textbf{4}    & $0 \pm 0$  &$0 \pm 0$  &$0 \pm 0$  &$0 \pm 0$  &$0 \pm 0$    \\
    \textbf{5}    & $0 \pm 0$  &$0 \pm 0$  &$0 \pm 0$  &$0 \pm 0$  &$0 \pm 0$    \\
    \textbf{6}   &$0 \pm 0$  & $0 \pm 0$  &$0 \pm 0$  &$0 \pm 0$  &$0 \pm 0$  \\
    \textbf{7} &$0.1 \pm 0.3$     & $0.15 \pm 0.36$  &$0.05 \pm 0.22$  &$0 \pm 0$  &$0.15 \pm 0.36$  \\
    \textbf{8} &$0.85 \pm 0.74$   & $0.4 \pm 0.59$  &$0.05 \pm 0.22$  &$0.15 \pm 0.48$  &$0.55 \pm 0.81$  \\
    \textbf{9} &$1 \pm 0.91$     & $0.7 \pm 1.02$  &$0.05 \pm 0.22$  &$0.45 \pm 1.04$  &$0.65 \pm 0.86$  \\
    \textbf{10}  &$1 \pm 0.91$   & $0.7 \pm 1.02$  &$0.05 \pm 0.22$  &$0.45 \pm 1.04$  &$0.65 \pm 0.86$  \\
    \textbf{11}  &$1 \pm 0.91$  & $0.7 \pm 1.02$  &$0.05 \pm 0.22$  &$0.45 \pm 1.04$  &$0.65 \pm 0.86$  \\
    \hline
    \end{tabular}%
\end{center}
\end{small}
\end{table}

\begin{figure}[ht]
  {\center
    \includegraphics[scale=.27]{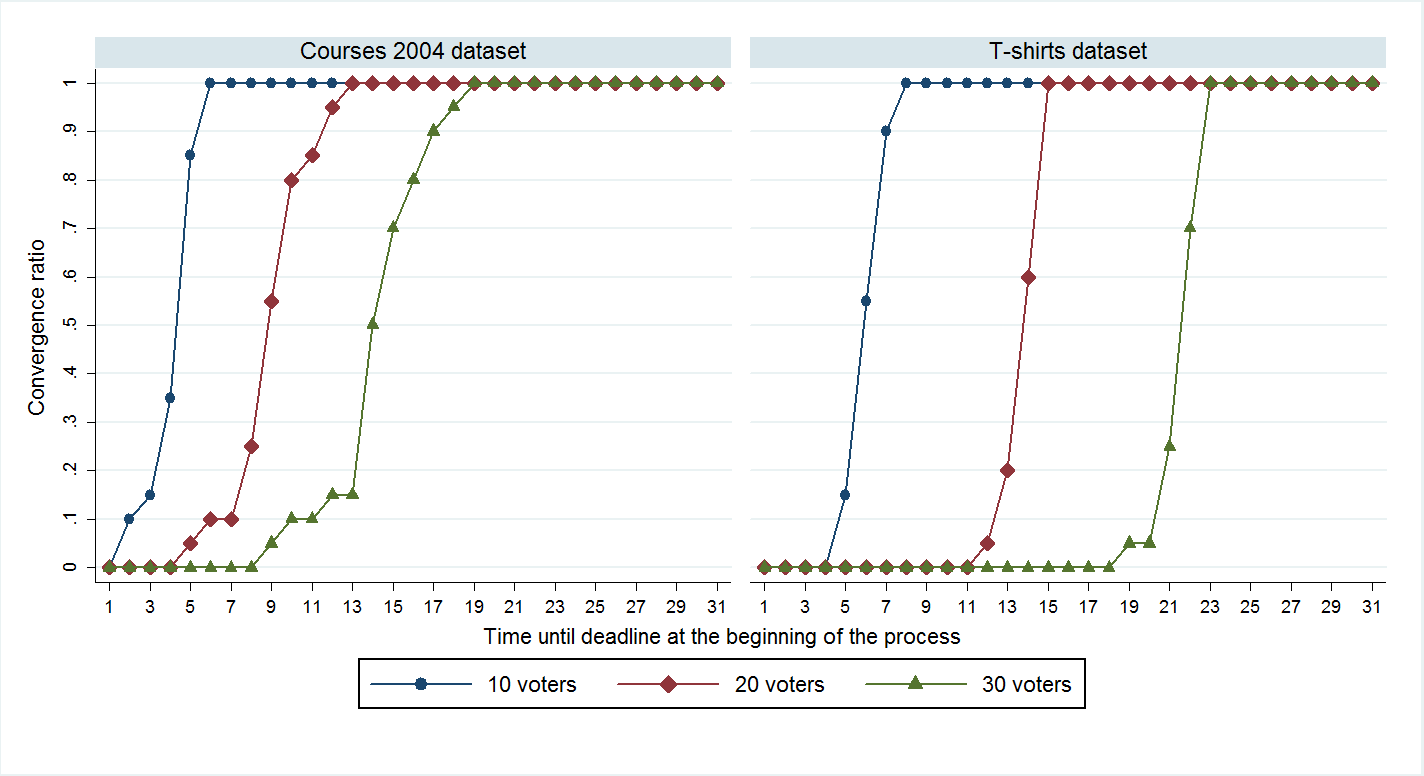}
    \caption{Voting process convergence ratio---Courses-2004 (left) and T-shirt (right)  data sets. On the x-axis is the time $\tau$ that the process begins; on the y-axis, the ratio of games that converged in that time, out of all experiments in the same sub-class (the sub-class of experiments are all experiments from a data set with the same number of voters and the same $\tau$).}
    \label{convergence1}\label{convergence2}
  }
\end{figure}


To conclude, it is interesting to note that although the convergence ratio is equal for both voter types, the required number of vote changes is \emph{higher} for the proactive, rather than lazy, voters. Furthermore, the additive price of anarchy is equal for proactive and lazy voters. We thus did not see any reason to continue with the proactive voters, who seem to be inferior to the lazy voters. 
Therefore, when designing the bots in user studies (in the next section and in  \cite{,Gvirts_Dery}), only the lazy voter architecture is used.

\section{CUD User study}\label{sec:userstudy}
The purpose of the user study was to examine whether human voters play in a rational manner (more on our specific definition of the term ``rational'' below). We built a game, called the CUD game which follows our CUD model structure. The game was played by students as well as rational bots that were programmed according to the lazy bot behavior described in Section \ref{sec:model}. We herein describe the game and the data collection method, and analyze the results. 

\subsection{CUD-Game flow} \label{sub:game_flow} 
The framework contains 2 modules:
\begin{enumerate}
\item The \emph{CUD-Game}\footnote{https://github.com/DavidBenYosef/CUDGame}: a web-based interactive multi-player decision game designed to facilitate an iterative group decision process.
The game was implemented as a Java server-client system with an online multilingual HTML with Javascript interface. All voter's actions were collected and saved to a MySQL database.
The game used a full-duplex asynchronous communication (Websocket protocol), to enable a fully interactive game between all voters.
\item The \emph{CUD-Runner}\footnote{https://github.com/DavidBenYosef/CUDRunner}%
: a standalone Java application that received the data collected by \emph{CUD-Game} runs, and allowed us to simulate and analyze the games.

\end{enumerate}

\begin{figure}[ht]
\centering
\includegraphics[scale=0.4]{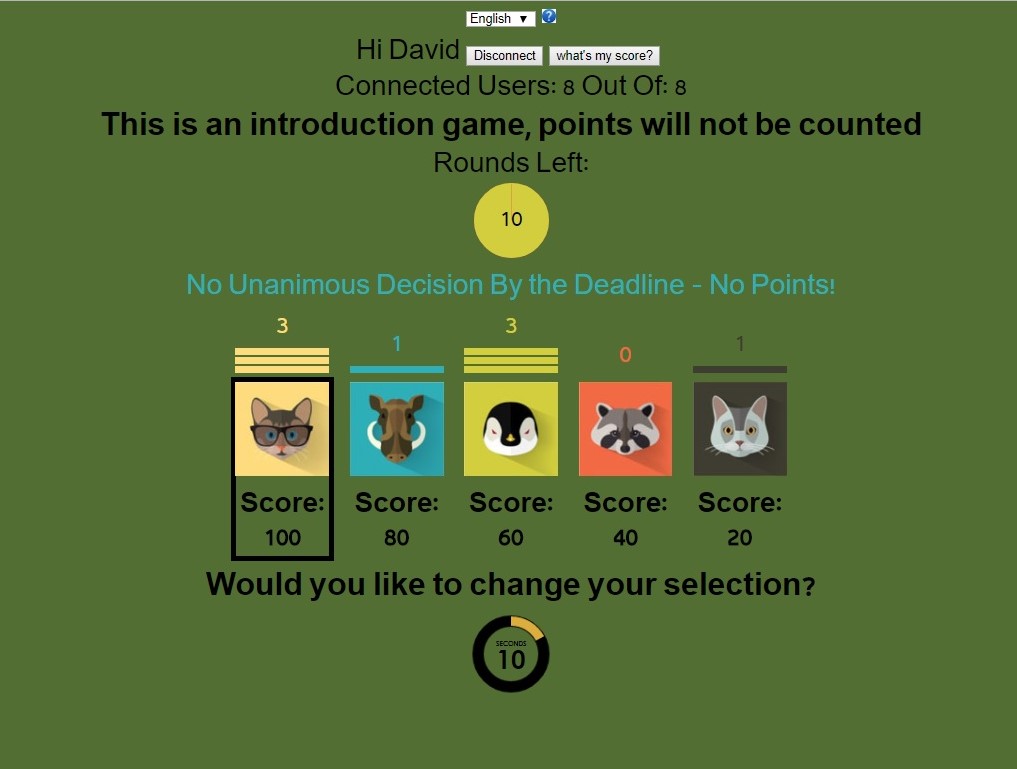}
\caption{The screen at the beginning of the game. The voter's highest preference (the cat with the glasses) is selected automatically.  }
\label{fig:game}
\end{figure}

\begin{figure}[ht]
\centering%
\includegraphics[scale=0.4]{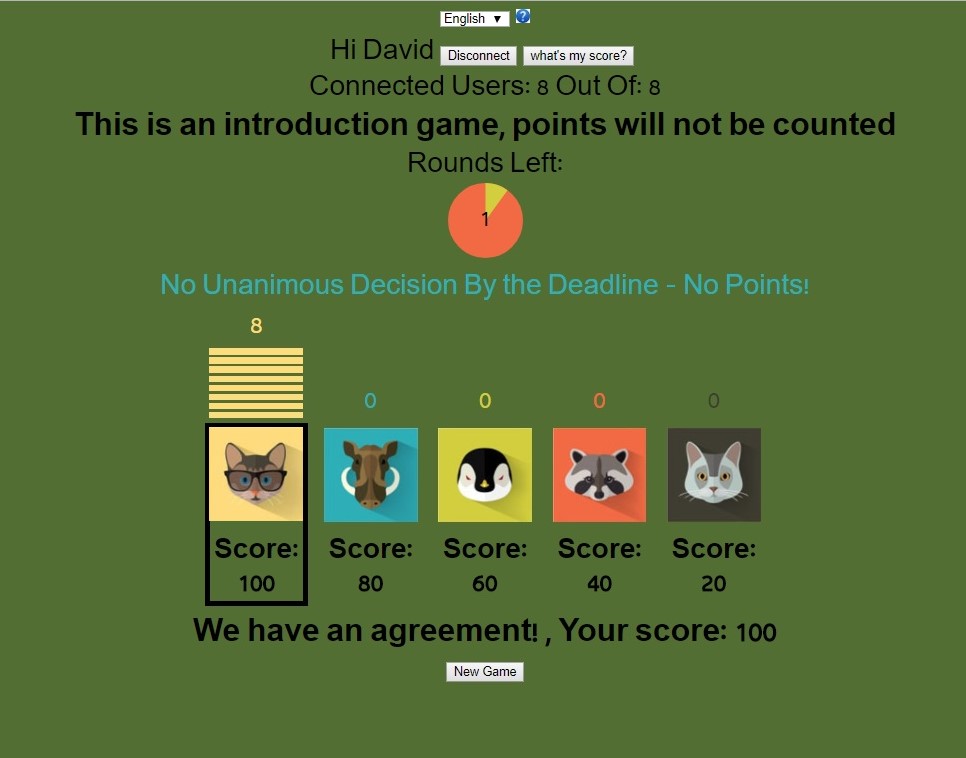}
\caption{An example of a screen at a possible end of the game. In this example, a consensus was reached; all voters agreed on the cat with glasses. }
\label{fig:finish}
\end{figure}

The \emph{CUD-Game} proceeds as follows:
first, the game is activated at an agreed time, so that all voters will play at the same time (while the game is off, the voters are not able to log-in, and they receive an appropriate message).
Once the game is on, the users are required to confirm their participation in the research. 
Next, each voter logs-in 
with a name and an identification number, and waits for the other voters to join the game. 
The game begins once the predefined number of voters is reached. When the number of logged voters exceeds the number of voters for a game, other games begin simultaneously and independently.
At the beginning of the game, each voter receives a predefined preference profile chosen uniformly at random. The highest preference for each voter is selected automatically and the current result is shown (Figure \ref{fig:game}).

On each round, each voter decides whether to change her current selection. Voters that want to change their selection do so by selecting another alternative. The voters are required to reply within a fixed time-span (usually set to $15$ seconds).
Next, the system randomly selects one of the voters who applied for a ballot change and computes the new intermediate results. The round ends. The system checks whether the deadline has been reached and whether a consensus has been reached.  When the answer to both of these questions is negative, a new round begins with the display of the current ballots. The participants are aware at all times of the number of remaining rounds.
If no consensus is reached by the deadline, all voters receive 0 points. When a consensus is reached, each voter receives a score corresponding to the chosen preference (Figure \ref{fig:finish}). 

\subsection{Reward method}
\label{sub:reward}
One of the fundamental requirements from the \emph{CUD-Game} is that no consensus is the worst option for all voters, and that there will be no reward if consensus is not reached.
This is the reason why we decided to avoid using the Amazon Turk platform as done by e.g.~\cite{meir2020strategic,mao2013better,scheuerman2019heuristic,mennle2015power} and many others, since this platform requires paying the players for logging in, regardless of their performance. In particular, in spite of the common measures~\citep{mason2012conducting},  ~\cite{meir2020strategic} faced ``ghost'' players that played only for the ``show-up payment'' (that was given to lab participants and players on the Amazon Turk platform) and thus introduced much noise into the collected data.
In our research, we defined the term of \emph{game points} to represent an abstract reward layer.
In each game, the voter had the option to gain between 0 to 100 game points, e.g., when there are 5 candidates, the score if the first preference is selected is 100 game points, the score for the second preference is 80 game points, and so on, so that the score of the least-preferred candidate is 20 game points. If there is no consensus by the deadline, the voter receives $0$ game points. For example, if a certain candidate is chosen, and a voter has this candidate ranked as her second preference, then she will receive 80 points for this game.

The score for each voter is collected and saved, and can be converted and paid in any relevant method later on. 
In our user-study we had undergraduate students play the game, and the scores were converted to course bonus points (200 game points were equivalent to 1 bonus point). Students could earn up to a total of five bonus points, that were added to their course grade. 
Thus, a student that received a grade of 95 in the course, and earned four bonus points, received 99 as a final grade. A student that earned five bonus points and received 88 in the course, received 93 as the final grade. 
Students had a high incentive to participate in the game, but participation was not obligatory; a student could receive a perfect score (100) in the course simply by completing all obligatory course work and receiving a perfect score (100) in the final exam. 

Our study received ethical approval of the institutional review board of Ariel University, Israel. The review board explicitly investigated the question of enforcement vs. incentivization. They agreed that our scheme is  of positive incentivization rather than negative enforcement. 
 
\subsection{Irrational voters}\label{sec:irrational}
In our context, rational behavior is the tendency to conform to the group by voting for a candidate that was chosen by the majority of the group.  In contrast to this rational behavior, irrational behaviors are actions that do not promote consensus-building. We followed the two types of irrational voting defined by  \cite{Gvirts_Dery}: 

\begin{definition}
{\em{Opposing alignment (OA)}:} a voter who opts to change her vote to a candidate $c_j \in C$ that she ranks lower than $c_{maj}$, the candidate with the current highest score. Namely, the voter prefers $c_{maj}$ over $c_j$, but still would like to vote for $c_j$.
\end{definition}

Consider the player in figure \ref{fig:player_3}. An OA action would be to vote for one of the cards that has a lower value than the penguin (values 40 or 20). 
If the voter realized that her current selection will not win, and decided to change her selection, she should at least change to $c_{maj}$ and not to a less preferred candidate.

\begin{definition}
{\em{Inappropriate alignment (IA):}} a voter who opts to change her vote to a candidate $c_j \in C$, when there are other candidates that the voter prefers over $c_j$, and furthermore have a score that is higher than the score of $c_j$. Namely, the voter opts to change her vote for a less preferred candidate with fewer votes.
\end{definition}
Returning to the player in figure \ref{fig:player_3}, the candidate preferred by most players is the penguin. This card has a value of 60 for player 3. An IA action would be to vote for the racoon which has a higher value (80) for him/her. 
This behavior is irrational, since there is no reason to change to a less preferred candidate, if this candidate also has a lower score.

\begin{figure}[ht]
\centering%
\includegraphics[scale=0.6]{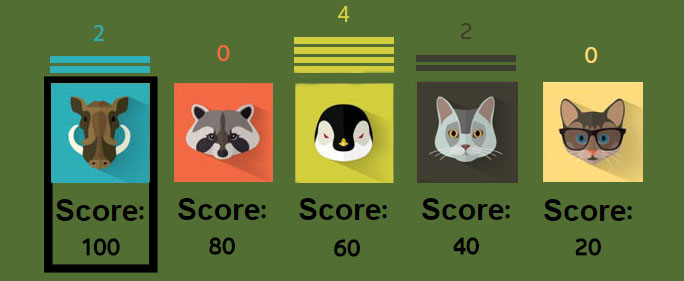}
\caption{The voter's options. The voter has previously voted for the blue card with the boar on it. The voter can choose if to stay with this choice, or vote for one of the other cards. The yellow card with the penguin currntly has the most votes - four votes.}
\label{fig:player_3}
\end{figure}


\subsection{Data Collection and Metrics}\label{sec:data}
Games consisted of eight voters each. 
We set a limit for the number of games per voter to 15 games, so that the voters would remain motivated to play and ``fight'' for every point. In an initial experiment (omitted from this paper for lack of interest), we found that when the number of games in unlimited, voters have no incentive to play. Instead, in games in which a player perceived the setting as unfavorable, he/she preferred to wait, knowing that at some stage other games with better setting will be played. 

A total of 72 students played a total of 264 games: 144 mixed games with two bots and six students, of which 137 converged, and 120 games with no bots, of which 105 converged. Note that the students had no idea whether they were playing against other students or against bots.

We also ran a set of bot-only games, eight bots per game. A total of 10000 bots played 1250 games. As expected, all of the games converged.

We found no significant difference in the number of irrational actions performed in the first two games of each student vs.~later games. We thus concluded that the irrational actions are not due to misunderstandings.
We compared between games with irrational actions ($OA$ or $IA$) and games without irrational actions (rational games). 
We defined irrational games as games where at least one irrational action was performed, and rational games as games where no irrational actions were performed.
We examined all the data collected in the case studies: students only games, mixed games of students and bots, and bot-only games.

We examined: 
\begin{itemize}
\item \textbf{Convergence percentage}
\item\textbf{ Average reward points} - how do the users' actions affect the average reward points received in a game?
\item \textbf{The price of Reality ($PoR$)} - the CUD-Game is played by real voters that are not necessarily rational. The voters might decide to agree on choosing the worst option, and thus each game has a potential to reach the highest possible $PoA^+$. Furthermore, each game is played only once (and defiantly not $30,000$ times as in the simulated experiments). Thus, we cannot measure $PoA^+$, and we  dubb what is actually measured as ``the Price of Reality" -- $PoR$. $PoR$ is defined as the score of the outcome of the CUD-Game subtracted from the score of the truthful winner. 
\item \textbf{Convergence time} - how do user (and bot) actions affect the time it takes the game to converge?
\end{itemize}

\subsection{Game Results}\label{sec:results}
In order to determine whether our results obtain a statistically significant difference with respect to the four different phases and with respect to the game state (rational or irrational), we performed a two-way ANOVA analysis. All results were found to be significantly different with $pvalue<0.05$.

Figure \ref{fig:convergence} shows the influence of irrational actions on the convergence of the games.
Axis y displays the percentage of converged games out of all games.
As expected, we see that irrational actions lower the chances to reach an agreement by $10\%-15\%$.
We also see that bot-only games always converge, and that mixed games converge in higher percentages than student-only games, since less voters can play irrational actions (98\% for mixed games with rational actions only and 88\% for mixed games with irrational actions, 89\% for student games with rational actions only and 82\% for student games with irrational actions).

\begin{figure}
\centering%
\includegraphics[scale=0.5]{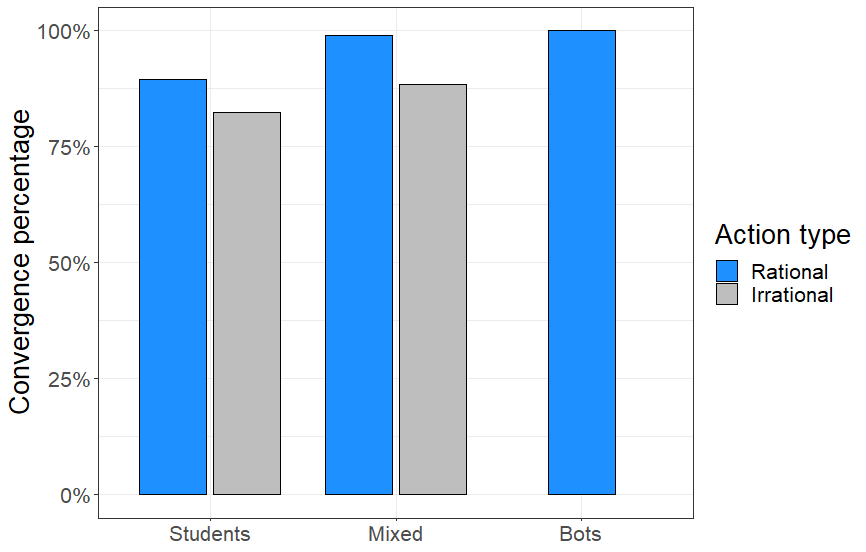}
\caption{The influence of irrational actions on the convergence of the games according to game type (Students only, Mixed students and bots or bots only)  for games with only rational actions and games that include irrational actions. Axis y displays the percentage of converged games out of all games. }
\label{fig:convergence}
\end{figure}

\begin{figure}
\centering%
\includegraphics[scale=0.5]{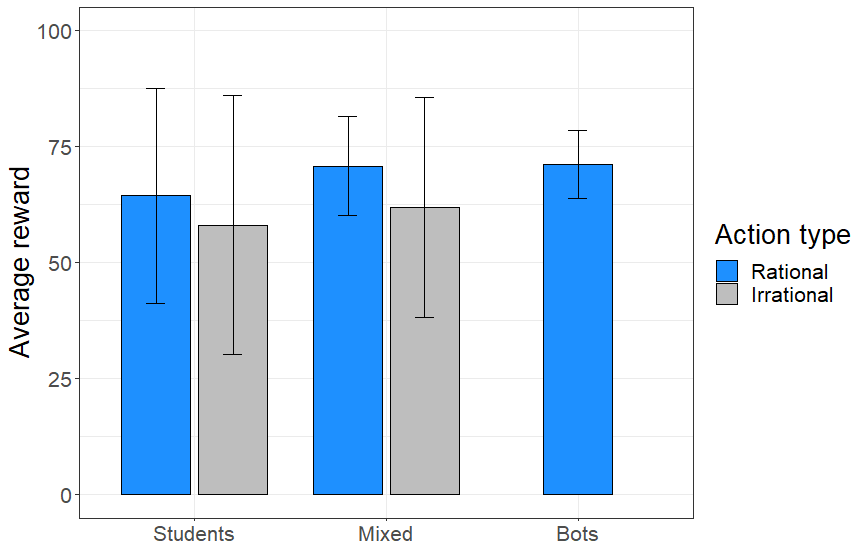}
\caption{Average reward points comparison according to game type (Students only, Mixed students and bots or bots only)  for games with only rational actions and games that include irrational actions. Bots are better than student players, even if they are rational student players. }
\label{fig:reward} 
\end{figure}

Next, we examined whether a high convergence percentage necessarily results in a higher satisfaction, as theoretically a few converged game with high average reward might produce higher satisfaction than many converged games with low average reward.

Figure \ref{fig:reward} shows the global satisfaction factor, i.e., the average reward points per game.
We notice a full correlation with the convergence percentage (Figure \ref{fig:convergence}); we see a consistent decrease in the satisfaction factor in games with irrational votes.
This is to be expected since when a voter plays irrationally, she plays against her interests, and more games do not converge, resulting in a decrease in satisfaction.

Our most important finding is found here: we see that the global satisfaction increases in mixed games (bots and student), and reaches its highest value in bot-only games. In other words, bots are better than student players, even if they are rational student players. Bots in a game (a student-bot game or a bot-only game) increase convergence percentage and increase the satisfaction of all voters, bots and students alike.

Note that bots outperform student players in rational games on both metric: bots converge faster and receive higher rewards. Although students in rational games do not perform any irrational moves, they still might decline to change their vote when needed and in this way block convergence. We did not log such declinations as irrational. We leave it for future research to study when such actions occur. 

Next, we examined the $PoR$ and the convergence time factors between converged games with rational actions and converged games with irrational actions. We avoided comparing games that ended with a default alternative, as the factors for these games are not defined. Since we have one winner for each game, we did not compute an approximation to the additive price of anarchy (as in Section \ref{sec:simulations}). Rather, we computed the score of the plurality winner in the truthful profile minus the score of the final winner in the game.

\begin{figure}
\centering%
\includegraphics[scale=0.5]{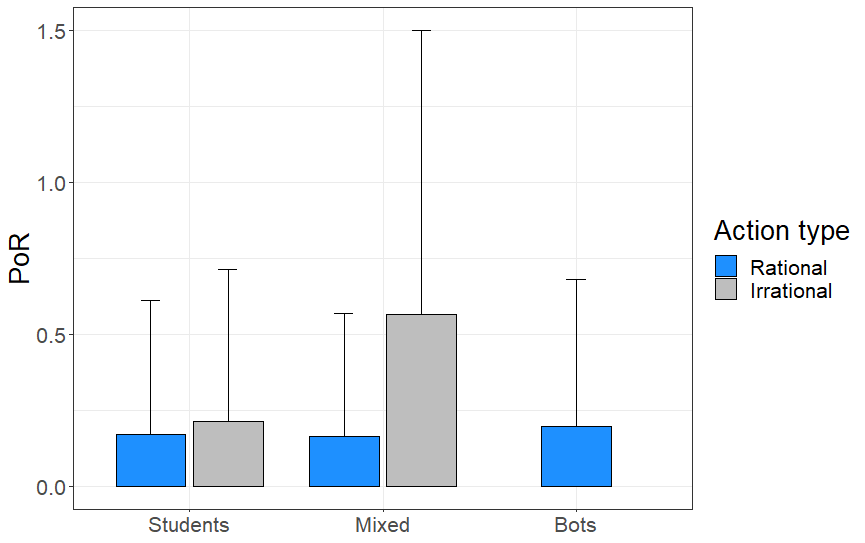}
\caption{Average $PoR$ according to game type (Students only, Mixed students and bots or bots only)  for games that converged with only rational actions and games that included irrational actions.}
\label{fig:poa}
\end{figure}

Figure \ref{fig:poa} presents the comparison of the average $PoR$ factor for rational and irrational games that converged.
The $PoR$ factor represents the distance between the truthful majority winner and the actual unanimity winner.
We see that irrational actions cause a higher $PoR$, since irrational voters play against their interests and may sometimes become the deciding factor, resulting in a consensus which is not the majority winner.
We also identify a big influence of irrational actions in the mixed games. After further investigation, we identified that when playing with irrational students, bots reached a higher average score.
We conclude that bots manage to ``recover'' from irrational actions better than students do, and to reach the best possible option considering the new situation.
Interestingly, we see that the $PoR$ of bot-only games is higher than the $PoR$ of games with rational students.  In fact, a high $PoR$ joint with a global satisfaction improvement is an indication of good strategic moves. Bots strategies change the majority winner to some other candidate, leading to an improvement in the global satisfaction.

\begin{figure}
\centering%
\includegraphics[scale=0.5]{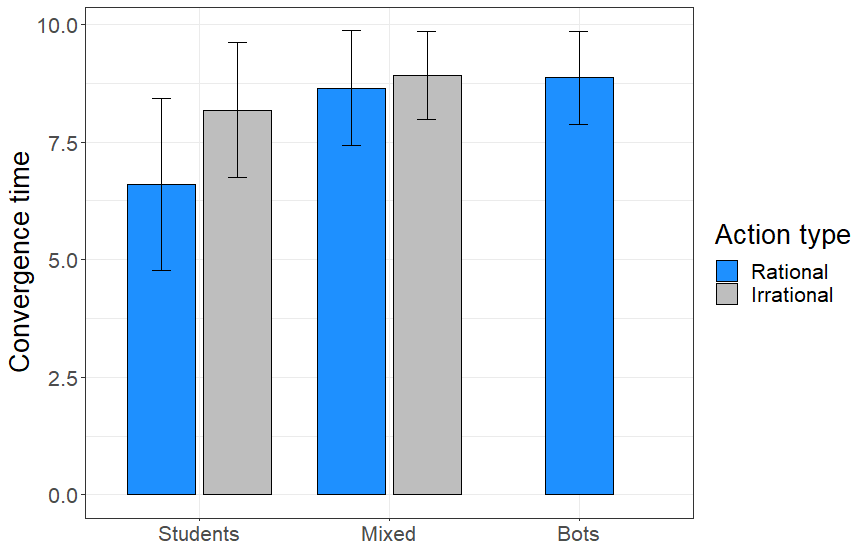}
\caption{The comparison of the convergence time factor according to game type (Students only, Mixed students and bots or bots only)  for games with only rational actions and games that include irrational actions.
The convergence time is the number of rounds the game took to converge. Convergence time 10 is the slowest, and it means the game converged only on the last round.}
\label{fig:time}
\end{figure}


Figure \ref{fig:time} demonstrates the comparison of the convergence time factor for rational and irrational games.
The convergence time is the number of rounds the game took to converge. Convergence time 10 is the slowest, and it means the game converged only on the last round.
We see that irrational actions cause the games to converge slower, as all other voters need to adapt their logic to the unexpected results.
Bot-only games converge the slowest, as bot logic makes them converge only when necessary.
As reported in figure \ref{fig:convergence}, bots always converge. Students do not always converge, but figure \ref{fig:time} shows that when they do converge - they do it faster than bots. 
Further research is required to determine why this is. Perhaps students converge faster since they are proactive, in contrary to the lazy bots. Maybe students are afraid of losing all the game points, so they prefer not to wait until the last round or may vote for a candidate with the largest number of votes (but with a smaller utility). 

{\bf Conclusion:}
Perhaps the bots' main weakness is that they operate under the assumption that all other voters (who are also bots) have the same logic. However, students do not hold the same assumption about other students. Games played by students converge faster since students are concerned that the game will not converge and end with the default alternative (the worst option for all involved), so they prefer not to wait until the last round. For the same reason, a single student playing against bots can manipulate them and change the outcome in her favor, meaning that the result is not necessarily the option preferred by the majority.
However, bots have a big advantage - they increase the convergence percentage and increase the satisfaction of all voters, bots and students alike.

\section{Discussion}\label{sec:conclude}
We present a novel model for an iterative voting process with a restricted number of iterations---\textit{Consensus Under a Deadline (CUD)}. 
We chose the sequential vote modification process because it is motivated by real-life scenarios (such as the school registration problem discussed in Section \ref{sec:intro}), and also since this is a widespread assumption in the iterative voting literature. 

After studying the  theoretical features of the CUD we find that when there are candidates whose lack of votes is smaller than the time until the deadline (there is enough time so that voters can switch their votes to this alternative), then the CUD converges with one of these candidates as a winner.  Moreover, if there is a candidate for whom  at least half of the voters voted, then the CUD converges with one of these candidates as a winner.   

Obviously not all individuals behave identically. We define two types of voters, proactive and lazy. Proactive voters are, in a sense, trigger-happy to change their vote, even if just to ensure that their preferred possible winner gets one more point. Lazy voters change their votes only when it is necessary to do so, i.e., when their vote is pivotal to keeping a particular alternative as a possible winner. 
Though it is natural to see proactive voters as those that actively seek consensus, they reap no benefit from their activism. Specifically, convergence time of both  proactive and  lazy voter CUDs are the same. On the other hand, as our experiments show, the number of vote changes until convergence is higher for proactive voters. In a way, they are inefficient in their behavior.


Theoretical results show $PoA^+$ principal bounds, but yield no specific trade-offs. Hence, our experiments also took a deeper look into additive Price of Anarchy as a measure of  winner quality. 
While re-confirming $PoA^+$ bounds experimentally, our experiments indicate that there is no difference between the $PoA^+$ of lazy and proactive voters. Thus, when designing voter bots, we designed them according to lazy voter behavior.

As a result of this study, bots that follow the theoretical model in Section \ref{sec:model} have been implemented in a \emph{CUD-Game}, a game developed in order to collect and analyze human behavior in consensus reaching scenarios with a tight deadline \citep{Yosef:2017}.
The CUD-Game was also used to study consensus reaching in students with Alexithymia \citep{Gvirts_Dery}. 

In the \emph{CUD-Game} we implemented voter bots, which follow the theoretical model in Section \ref{sec:model}, and compared it with data collected in a user study. We used the \emph{CUD-Game} to run a user study and examine how students reach an agreement, i.e., a consensus, within a tight deadline. The game enabled us to compare the students' behavior to rational (lazy) bot behavior. 

The results of decisions made only by bots, seem to have higher quality than when humans reach decisions. A reasonable conclusion is that when we want to know that a decision will converge, and that all voters will be as satisfied as possible with the result, then it is best for users to leave bots to reach a decision on their behalf. A user can notify her personal bot of her preferences, and let the bot represent her in the decision making process. Our results coincide with \cite{de2018social} who study interaction via agents, concluding that agents lead to fairer results.

Bots are playing ever-larger roles in our daily lives. We rely on them for navigation directions (e.g., Google Maps or Waze), we rely on them for information retrieval (e.g., any search engine), and we may one day rely on them when attempting to reach a group decision. This research is a step in that direction.


\textbf{Future Work:} One challenging extension is to adapt our model to use general Positional Scoring Rules (PSRs), e.g., veto, approval, and Borda rather than simple Majority. These rules would allow us to express more complex semantic structures over the set of alternatives, even if we stay with the committee example. For instance, committee members may need to decide on a candidate \emph{and} the pay grade. Some members may vote for a particular candidate, but veto a higher pay-grade, if they believe the candidate is not ready for a higher rank. At the same time, other committee members may be indifferent between two reasonable candidates.
In these cases, veto and approval voting are more appropriate than Majority.

Stepping even further away from our basic model, we need to investigate what information about the current vote affects a CUD's outcome, and in what way. For example, rather than using utilities that depend on an anonymous score vector, we can use weights to express the fact that the opinion of certain voters is more influential. The same technique would allow us to impose a price on the number of times a voter changes her ballot, e.g., her vote loses influence, being unstable. Simultaneous re-voting would be another research direction, entailing very non-trivial modifications.

We must also note a possibility to introduce a more global change in how we view a voter's strategic behaviour. Specifically, while we have justifiably, to an extent, assumed voter's ballot change to be myopic, higher cognitive hierarchies have drawn the recent attention of other researchers. We consider this to be an important issue to pursue among the extensions of our work, i.e., consider ballot selection that takes into account the contemporaneous reasoning of other voters. 


Finally, there are additional methods to measure the quality of an agreement point. While in this paper we chose to concentrate on Additive Price of Anarchy, various measures of social welfare are also popular in the voting literature, and we intend to pursue them as well. 

Of particular note is a possibility raised by one of our reviewers. Namely, that the vote aggregation function on which the deadline-limited protocol is based can be different from the aggregation that underpins the quality measurement of the agreement point\footnote{There are precious few works that already do so, such as~\cite{thompson2012empirical}, and the standard approach is to keep both protocol and evaluation aggregations the same.}. This can arise, for instance, as a result of an external practical limitation during system deployment. In this case, the quality of the agreement point achieved by the protocol should be interpreted as a more general approximation: an approximation of the desired voting aggregation by the protocol that can be deployed in practice. An intriguing facet of this idea is that it can be viewed as an attempt to approximate vote aggregation that is based on {\em private} evaluations of (rewards from) various options by voters. This is in sharp contrast to the standard approach that focuses on performance measures based on {\em disclosed} information. E.g., using the standard approach, we define $PoA+$ in terms of candidate scoring, i.e., information that is explicitly disclosed by voters.

As for our developed framework, we hope to accommodate real-life decisions instead of games. Hopefully, voters will be able to set their preferences for real decisions that they care about, and then use the system to find a consensus. 
Lastly, it would be interesting to investigate and develop bots who change their strategies depending on whether other voters are humans or bots. 



\newpage
\bibliographystyle{spbasic}
\bibliography{ref}

\newpage
\appendix
\section{Proofs of Theorems, Lemmata and Formal Statements}

\textbf{Theorem 1}
\textit{In any $CUD(\calF^{IMaj}_\sigma, \tau)$ with $\sigma\in(\frac{n}{2},n]$ and consistent utility functions, either the default alternative is set at $t=0$, or a valid alternative becomes the winner at time $t\in[0:\tau]$.}

\begin{proof} Proof of Theorem \ref{thm:cuds_stop}.

 First, we prove by contradiction that if there are possible winners at the beginning of the process, then the process converges with some valid alternative as a winner. This implies that at stage $\tau$ the set of possible winners is not empty, nor is it empty at the final stage $0$, since there is a winner.
 Suppose that it does not hold: the process does not converge, even if there were possible winners at the beginning. In other words, at the final step $t=0$ the set of possible winners is empty, although at the initial step $t=\tau$ it is not empty.
This can happen if, at some time step between the beginning at $t=\tau$ and the end at $t=0$, the set of possible winners becomes empty (including the time step $t=0$). Consider the time step $\tau'$, such that for all $\tau''>\tau'$ the set of possible winners is not empty, while for $\tau'$ it is empty.

Consider the preceding iteration of our game protocol, i.e., the time step $\tau'+1$. There are two possible scenarios: (i) no voter changes the vote, and (ii) some voter changes the vote.

First, we assume that at the time step $\tau'+1$ there are no voters who wish to change their votes. Since at $\tau'+1$ the set of possible winners is not empty, for all the voters the utility is not zero under the current strategy. Since their utility at time $\tau'+1$ reflects the outcome of time $\tau'$, the set of possible winners could not become empty at time step $\tau'$.

Second, if the second scenario has occurred, at the time step $\tau'+1$ there are some voters who do change their vote. The only reason for a voter to change her vote is to improve her utility by switching away from an alternative that is either no longer in the set of possible winners, or will be removed from the set at the next time-slice. Such a switch, however, necessitates that the voter's newly chosen alternative remains in the set of possible winners at the next time-slice. Otherwise, the voter's utility from this new choice would not be positive, and no decision switch would occur. In particular, this implies that the set of possible winners is not empty. The obtained contradiction proves that if the set of possible winners is not empty at $\tau$, it cannot become empty at $0$, therefore there is a valid alternative that is  declared as a winner.

What remains is to show consistency of our protocol, i.e., that if at the time step $\tau$ the set of possible winners is empty, it must be empty at the time step $0$ as well.
 Denote $c \in C^+$ an alternative that at time step $\tau$ has the maximum score $s_c$ among all other alternatives. Since the set of possible winners is empty at the time step $\tau$, it implies that $\sigma-s_c \geq \tau+1$. Even if at each step until the deadline there will be a voter that changes her vote for this alternative $c$, at time $0$ this alternative will have obtained only $\tau$ more votes. 
 As a result, the score of the alternative $c$ at time step $0$ can be at most $s_c+\tau$, which would imply that $\sigma-s_c-\tau\geq 1$ and $c\not\in\widehat{W}(\vecs,0)$. As the score of $c$ is maximal possible at time $t=\tau$, the same holds for all other alternatives. Hence, no alternative is a member of $\widehat{W}(\vecs,0)$---it is empty.
 \end{proof}

\noindent
\textbf{Lemma 1}
\textit{Let $c \notin \widehat{W}$ at step $t$, then $c \notin \widehat{W}$ at any step $t'<t$.
}

\begin{proof} Proof of Lemma \ref{lemma:not_in_PW_before}.

  Given that $s_c^t<\sigma-t$ 
  and at each step the candidate $c$ can get at most one vote, then
  $s_c^{t'}<s_c^t+(t-t')<\sigma-t+(t-t')=\sigma-t'$ 
  for any $t'<t$. Therefore, $c \notin \widehat{W}$ at step $t'$.
\end{proof}

\noindent
\textbf{Lemma 2}
\textit{If $s_c^{t+1}<s_c^t$, then $c \in \widehat{W}(s^t,t)$.}

\begin{proof} Proof of Lemma \ref{lemma:if_1_vote_more}.

Given that $s_c^{t+1}<s_c^t$, there exists a voter, $i$, who changed her vote in favor of $c$ at time step $t$. Let $c'$ denote a candidate that she voted for at the preceding time step $t+1$, and let us assume that the lemma's conclusion does not hold. That is, let us assume that $c \notin \widehat{W}(s^t,t)$. 

Notice that, except $c$, no candidate increased his score from the time step $t+1$ to the time step $t$. This is because only one voter was given the chance to change her vote, and thus $s^t=s^{t+1}-c'+c$. 
As a result, $\widehat{W}(s^{t+1},t)\supseteq \widehat{W}(s^t,t)$. Therefore, either (but not both) of the following holds:
\begin{itemize}
\item $a_i(\topc_i(\widehat{W}(s^{t+1},t)), \topc_i(\widehat{W}(s^t,t)))$
\item $\topc_i(\widehat{W}(s^{t+1},t))= \topc_i(\widehat{W}(s^t,t))$.
\end{itemize}


A voter changes her vote only if it increases her utility, and conditions above indicate that a lazy voter's utility (Definition~\ref{def:lazy}) will not change between $s^{t+1}$ and $s^t$. Thus, voter $i$ can not be a lazy voter. 

Now, combining the fact that $s^t=s^{t+1}-c'+c$ with our attempt to assume that $c \notin \widehat{W}(s^t,t)$, we conclude that no candidate in the set $\widehat{W}(s^t,t)$ has a score higher that he has in $s^{t+1}$. More formally, $\forall \widehat{c}\in\widehat{W}(s^t,t),\ \ s^t_{\widehat{c}}\leq s^{t+1}_{\widehat{c}}$. As a result, switching from $c'$ to $c$ would not have been the preferred move of a proactive voter (Definition~\ref{def:act}), as it does not change the utility of the set of possible outcomes. 

We conclude that if the set of voters consists of lazy and/or proactive voters, then the assumption $c\notin \widehat{W}(s^t,t)$ leads to a contradiction of no voter having an incentive to change her vote. 
Thus, $c\in \widehat{W}(s^t,t)$ must hold.
\end{proof}

\noindent
\textbf{Lemma 3}
\textit{If a voter $j$ at the time step $t$ votes for candidate $c \in \widehat{W}(s^t,t)$, then  $c=\topc_i(\widehat{W}(s^t,t))$.}

\begin{proof} Proof of Lemma \ref{lemma:vote_for_top}.

First notice that all voters initially vote for their top choice, thus the lemma's conclusion holds for $t=\tau$. However, let us assume, to the contrary, that the lemma does not hold in general. In particular, it would imply that there is a time step $t$ such that for any $t'>t$ (i.e., preceding steps) the statement of the lemma is fulfilled, and at step $t$ there is a voter $j$ such that: (i) she votes for $c\in\widehat{W}(s^t,t)$; (ii) there is a candidate $\pord{c'}{c}{j}$ in the set $\widehat{W}(s^t,t)$. 

Lemma \ref{lemma:not_in_PW_before} implies that $\widehat{W}(s^t,t) \subseteq \widehat{W}(s^{t+1},t+1).$ If voter $j$ did not change her vote at time $t$, then by maximality of $t$ holds $c=\topc_i\in\widehat{W}(s^{t+1},t+1)$ and it implies that $c=\topc_i\in\widehat{W}(s^t,t)$, which we assumed not to hold. Thus, $j$ must have changed her vote at time step $t$. However, since there is $\pord{c'}{c}{j}$ in the set $\widehat{W}(s^t,t)$, both lazy and proactive consistent utility (see Definitions~\ref{def:lazy},~\ref{def:act}) from $c$ is lower than it is from $c'$ at time step $t$, which contradicts optimality of choice in voting decisions (see Game Protocol~\ref{cud_game}, line 8). 
\end{proof}

\noindent
\textbf{Lemma 4}
\textit{If there is a voter that changes her ballot at time $t$ from voting for $c$ to voting for $c'$, then $s_c^{t+1} \leq s_{c'}^{t+1}$.}

\begin{proof} Proof of Lemma \ref{lemma:if_change_vote_score_changes}.

Let us assume the opposite to the lemma's conclusion, that is $s_{c}^{t+1}>s_{c'}^{t+1}$ and consequently, $s_c^{t+1} \geq s_{c'}^{t+1}+1=s_{c'}^t$.  As a result, according to Lemma \ref{lemma:if_1_vote_more}, $c' \in \widehat{W}(s^t,t)$. However, since $s_{c'}^t$ is sufficient to become a possible winner at time $t$ and $s_c^{t+1}\geq s_{c'}^t$, holds that $c \in \widehat{W} (s^{t+1},t)$. Note that, according to Lemma~\ref{lemma:not_in_PW_before}, the above implies that $c,c' \in \widehat{W}(s^{t+1},t+1)$.

Now, Lemma \ref{lemma:vote_for_top} implies that if a voter votes for a candidate in the set $\widehat{W}(s^{t+1},t+1)$, then she prefers this candidate over all other possible winners. Therefore, similarly to the proof of Lemma~\ref{lemma:if_1_vote_more}, neither a lazy nor a proactive voter would change their votes. The obtained contradiction proves the lemma. 
\end{proof}

\noindent
\textbf{Corollary 1} 
\textit{Let $\veca=(a_1,\ldots,a_n)$ be the truthful profile, let $\tau$ be the deadline time, and let $\vecb$ be the ballot profile induced by $\veca$, i.e., $b_i=\topc_i(C)$. CUD stops with some $w\in C^+$  if and only if there is an alternative $c\in C^+$ so that $\score_c(\vecb)\geq \sigma-\tau$.}

\begin{proof} Proof of corollary \ref{thm:finer_bound}.
  
First, let us assume the right-hand side of the ``if and only if'' statement, and show that CUD stops with a non-default alternative.  The condition $\score_c(\vecb)\geq \sigma-\tau$ implies that $c$ is a possible winner by the definition of $\widehat{W}(\vecb,\tau)$. Thus, at time step $t=\tau$ the set of possible winners  contains at least one alternative. As a result, by Theorem \ref{thm:cuds_stop}, the process converges with some (non-default) alternative chosen as the winner.

Now, let us deal with the opposite direction of the Theorem's implication. Let CUD stop with some $w \in C^+$. Then this candidate $w$ achieved $\sigma$ votes in no more than $\tau$ steps. At each step he could  get 1 vote at most, that is, he achieved no more than $\tau$ votes. Thus, for the initial score of $w$ it must hold that $\score_w(\vecb)\geq \sigma-\tau$.
  \end{proof}
  
\noindent
\textbf{Theorem 2}
\textit{Let $\veca=(a_1,\ldots,a_n)$ be the truthful profile, let $\tau$ be the deadline time, and let $\vecb$ be the ballot profile induced by $\veca$, i.e., $b_i=\topc_i(C)$). If there is an alternative $c\in C^+$ so that $\score_c(\vecb)\geq\max\left\{\left\lfloor\frac{n}{2}\right\rfloor+1,\sigma-\tau\right\}$ then CUD terminates with $c$ as the winner.}

\begin{proof} Proof of Theorem \ref{thm:single_winner}.

Note that at any step such that all voters whose top-choice is candidate $c$, vote for $c$, for any other candidate it is true that $s_{c'}^t \leq n- s_c^t \leq \left\lfloor\frac{n}{2}\right\rfloor$ and $s_{c'}^t \leq s_c^t -1$.

Thus, if CUD terminates with a winner other than $c$, it implies that $c$ loses some votes of those whose top-choice is $c$. 

Consider $t$ such that for every $t' \geq t$ all voters whose top-choice is $c$ vote for $c$ and at step $t$ one of them changes her vote to $c'$. Thus, given that no one of them has changed their vote before, $s_{c'}^t<s_c^t$ which contradicts Lemma \ref{lemma:if_change_vote_score_changes}. That is, there is no such $t$. 

Therefore, candidate $c$ retains the same number of votes, $\score_c^\tau$, until step 0, which implies that at that last step he has more votes than any other candidate. 

%
%
%
%
\end{proof}

\noindent
\textbf{Theorem 3}
\textit{Let $\veca$ be the truthful profile of voters participating in a CUD. Assuming that it is well-defined for the CUD instance, the following bounds can be placed on the additive Price of Anarchy, $PoA^+$, depending on the ratio of the deadline timeout $\tau$ and the number of voters $n$:
\begin{enumerate}
\item If $\tau \leq \sigma - \left\lfloor\frac{n}{2}\right\rfloor$, then \begin{equation}PoA^+(\veca)=0.\tag{\ref{eq:poa_bound_1}}\end{equation}
\item If $\sigma-\left\lfloor\frac{n}{2}\right\rfloor<\tau<\sigma$, then \begin{equation}PoA^+(\veca) \leq \left\lfloor\frac{n}{2}\right\rfloor+\tau -\sigma\tag{\ref{eq:poa_bound_2}}\end{equation}
\item If $\tau\geq \sigma$, then
\begin{equation}PoA^+(\veca) \leq \left\lfloor\frac{n}{2}\right\rfloor-1.\tag{\ref{eq:poa_bound_3}}\end{equation}
\end{enumerate}}

\begin{proof} Proof of Theorem \ref{thm:poa_bounds}.

{\bf Case when  $\tau \leq \sigma - \big\lfloor\frac{n}{2}\big\rfloor$.}

 Note that, even if at each step every voter would change her vote in favour of the same alternative $c$, this alternative $c$ would get no more than $\tau$ points.

 \[s_c^{\tau}+\tau \geq \sigma \Leftrightarrow s_c^{\tau} \geq \sigma - \tau \geq \big\lfloor\frac{n}{2}\big\rfloor\]

 Note that, there can be at most two alternatives with score $\big\lfloor\frac{n}{2}\big\rfloor$.
 If $c$ is the only one, then he is the winner, hence $PoA^+(\veca)=0$.
 Suppose there are two such alternatives: $w$ and $c$. If they have equal scores at $\tau$: $s_w^\tau=s_c^\tau=\big\lfloor\frac{n}{2}\big\rfloor$, then,  whoever wins, $PoA^+(\veca)=0$.
 Another possibility is that there are two such alternatives, $w$ and $c$, such that $w$ has more points, i.e., $s_w^\tau=s_c^\tau +1$, and  all other alternatives have 0  points. Alternative $c$ would win only if every supporter of $w$ would change to $c$, which would take all $\tau$ stages. But, from those who initially voted for $w$, no voter would change her vote to $c$, since they are better off by keeping their votes for $w$. Hence, $w$ will win, and consequently,  $PoA^+(\veca)=0$.

{\bf Case when $\sigma-\left\lfloor\frac{n}{2}\right\rfloor<\tau<\sigma$}

Let $\omega$ denote the plurality winner at the time step $t=\tau$ and $c$ denote the winner at the time step $t=0$, and let us assume the contrary to the Theorem, i.e., $PoA^+(\veca)>\left\lfloor\frac{n}{2}\right\rfloor+\tau-\sigma$. In particular, it would mean that $\omega\neq c$ and 
\begin{equation}
s_{\omega}^\tau-s_c^\tau>\left\lfloor\frac{n}{2}\right\rfloor+\tau - \sigma\label{eq:theorem3.2}
\end{equation}
In $\tau$ steps candidate $c$ obtains at most $\tau$ votes and becomes a winner, that is $s_c^\tau+\tau \geq \sigma$. Therefore, $s_c^\tau \geq \sigma-\tau$. Combining this with Equation~\ref{eq:theorem3.2}, we obtain:

\[s_{\omega}^\tau>\left\lfloor\frac{n}{2}\right\rfloor+\tau - \sigma +s_c^\tau \geq \left\lfloor\frac{n}{2}\right\rfloor+\tau-\sigma+\sigma-\tau=\left\lfloor\frac{n}{2}\right\rfloor\]

Thus, $s_{\omega}^\tau>\left\lfloor\frac{n}{2}\right\rfloor$ and, consequently, $s_{\omega}^\tau \geq \left\lfloor\frac{n}{2}\right\rfloor+1$. However, according to Theorem~\ref{thm:single_winner}, this means that $\omega$ is the declared winner of the CUD at time step $t=0$. Which contradicts key part of our assumption: $c\neq\omega$. We thus must conclude that  $PoA^+(\veca) \leq \left\lfloor\frac{n}{2}\right\rfloor+\tau -\sigma$, as is per the Theorem.
 
 {\bf Case $\tau\geq \sigma$.}
 
 Notice again that if $PoA^+ \neq 0$ then the winner at the time step $t=\tau$ (denoted by $\omega$) and the winner at the time step $t=0$ (denoted by $c$) must be different. Now, Theorem \ref{thm:single_winner} implies that at the time step $t=\tau$ the truthful profile winner, $\omega$, can have at most $\left\lfloor\frac{n}{2}\right\rfloor$ votes, otherwise it must be the winner at the time step $t=0$ as well. 

At the same time, it must be that $s_c^\tau \geq 1$. Otherwise, no voter would be able to switch to $c$ at any time, as there will be at least one other candidate with a higher score than $c$ (the current winner) and, thus, Lemma~\ref{lemma:if_change_vote_score_changes} would prevent the switch. Since possible winners never lose votes, but only gain them, which means that the score of $c$ will never drop to zero either. Thus, $PoA^+(a) \leq \left\lfloor\frac{n}{2}\right\rfloor-1$, as required.
 \end{proof} 

\noindent
\textbf{Lemma 5}
\textit{The last two bounds in Theorem~\ref{thm:poa_bounds} are tight. For all $\tau$ and $n$ that satisfy the conditions of Equations~\ref{eq:poa_bound_2} and~\ref{eq:poa_bound_3}, there exists a truthful profile $\veca$ such that the corresponding bound holds as an equality.
}
 
\begin{proof} Proof of Lemma \ref{lemma:poa_bounds_tight}.

Table~\ref{tab:lemma5blocks} provides an example of a voting profile that proves that the bounds in Case 2 ($\sigma-\left\lfloor\frac{n}{2}\right\rfloor<\tau<\sigma$) of Theorem \ref{thm:poa_bounds} are tight. All voters are grouped into 3 Blocks. Voters in Block-1 prefer the candidate $c$ over $c_1$ and over all other candidates; voters in Block-2 prefer candidate $\omega$ over $c$ and over all other candidates; finally, each voter in Block-3 prefers some distinct candidate (but not $\omega$) over $c$ and over all other candidates. We assume that there are $\sigma-\tau$ voters in Block-1, $\left\lfloor\frac{n}{2}\right\rfloor$ voters in Block-2, and the rest of the voters are in Block-3.  It is assumed that $\sigma-\tau\geq 2$, so, there are at least 2 voters in Block-1.

\begin{table}[H]\caption{Proof of Lemma \ref{lemma:poa_bounds_tight}: voters' preferences}
 \label{tab:lemma5blocks}
\begin{center}
 
 \begin{tabular}{|c c c c|cc c c| c c c c|}
\hline
\multicolumn{4}{|c|}{Block 1 }& \multicolumn{4}{|c|}{Block 2 } & \multicolumn{4}{|c|}{Block 3 } \\
\multicolumn{4}{|c|}{($\sigma-\tau$ voters)}& \multicolumn{4}{|c|}{($\left\lfloor\frac{n}{2}\right\rfloor$ voters)} & \multicolumn{4}{|c|}{(The rest of the voters)} \\ \hline
$c$ & $c$ &$...$& $c$ & $w$ &$w$&$...$&$w$ & $c_1$&$c_2$&$...$&$c_k$\\
$c_1$& $c_1$ &$...$ & $c_1$ & $c$ & $c$ &$...$& $c$ &$c$ & $c$ &$...$& $c$ \\
\multicolumn{4}{|c|}{\scriptsize{All other candidates in any order}}&\multicolumn{4}{|c|}{\scriptsize{All other candidates in any order}}&\multicolumn{4}{|c|}{\scriptsize{All other candidates in any order}}\\
\hline
\end{tabular}

\end{center}
\end{table}

Notice that we can indeed construct such a preference profile given that $\sigma- \left\lfloor\frac{n}{2}\right\rfloor<\tau$, that is, $\sigma - \tau < \left\lfloor\frac{n}{2}\right\rfloor$.
In particular, the size of Block-3 is 
$k=n-\left\lfloor\frac{n}{2}\right\rfloor-\sigma + \tau$, and, therefore, every candidate from $\{c_1,c_2,...c_k\}$ appears as a top-choice only once. 

We assume that $\sigma - \tau \geq 2$ and $n$ is odd.  Then, 
$\widehat{W}(s^{\tau},\tau)=\{c,w\}$. Thus, at any time step prior to $\sigma - \left\lfloor\frac{n}{2}\right\rfloor$ voters from the Block-3 will want to change their vote to $c$, since they currently vote for a candidate outside the set of possible winner $\widehat{W}$ and for all of these voters $c\succ w$. No candidate from Block-1 or Block-2 will want to change their votes during that time. 


As a result, for $t=\left\lfloor\frac{n}{2}\right\rfloor-\sigma + \tau$ holds $s_c^{t}=s_w^{t}=\left\lfloor\frac{n}{2}\right\rfloor$. After this step all voters from Block 1 and all except one from Block 3 vote for $c$, and all the voters from Block 2 vote for $w$.  At the next step, $t-1$, all the voters will want to change their vote since the set of possible winners $\widehat{W}(s^{t-1},t-1)$ can only contain candidates with $\left\lfloor\frac{n}{2}\right\rfloor+1$ votes. 

Ties among voters who wish to change their vote are broken randomly, and with probability $\frac{\left\lfloor\frac{n}{2}\right\rfloor+1}{n}$ a voter will be chosen who will change her vote to $c$. This will make $c$ the only possible winner, which, more formally, means that $\widehat{W}(s^{\left\lfloor\frac{n}{2}\right\rfloor-\sigma+\tau-1},$ \\$\left\lfloor\frac{n}{2}\right\rfloor-\sigma+\tau -1)=\{c\}$. Hence, $c$ will be the winner of the entire election process, the final winner. Given that $c$ and $w$ are the only possible winners, and $w$ is a Plurality winner at time $\tau$, $PoA^+=\left\lfloor\frac{n}{2}\right\rfloor-\sigma+\tau$.

Table \ref{tab:lemma5blocks2} provides an example of a voting profile that proves that the bounds in the Case 3 ($\tau\geq \sigma$) of Theorem \ref{thm:poa_bounds} are tight. 

Once again, all voters are grouped into three blocks. Voters of Block-1 prefer candidate $\omega$ over $c$ and over all other candidates; Block-2 voters prefer candidate  $c$ over $c_1$ and over all other candidates; each voter in Block-3 prefers some distinct candidate (but not $\omega$) over candidate $c$ and other candidates. Let there be $\left\lfloor\frac{n}{2}\right\rfloor$ voters in Block-1, a single voter in Block-2, and the rest of the voters grouped into Block-3. We will assume that $n$ is odd, so that the number of voters in Block-3 is $k=\left\lfloor\frac{n}{2}\right\rfloor$.

\begin{table}[H]\caption{Proof of Lemma \ref{lemma:poa_bounds_tight}: voters' preferences -- point of ref}
 \label{tab:lemma5blocks2}
 \begin{center}
 \begin{tabular}{|c c c c|c| c c c c|}
\hline
\multicolumn{4}{|c|}{Block 1}& \multicolumn{1}{|c|}{Block 2} & \multicolumn{4}{|c|}{Block 3} \\
\multicolumn{4}{|c|}{($\left\lfloor\frac{n}{2}\right\rfloor$ voters)}& \multicolumn{1}{|c|}{(1 voter) } & \multicolumn{4}{|c|}{(The rest of the voters)} \\ \hline
$w$ & $w$ &$...$& $w$ & $c$ & $c_1$&$c_2$&$...$&$c_k$\\
$c$& $c$ &$...$ & $c$ & $c_1$ &  $c$&$c$ &$...$& $c$ \\
\hline
\end{tabular}
 \end{center}
\end{table}

Now, notice that at the time step $t=\sigma-1$ voters from Block-2 and Block-3 will want to change their vote. In particular, voters from Block-3 will want to change their votes to $c$, and, because ties among willing voters are broken uniformly at random, with probability $\frac{\left\lfloor\frac{n}{2}\right\rfloor}{\left\lfloor\frac{n}{2}\right\rfloor+1}$ a voter from Block-3 will be granted the opportunity and change her vote in favour of $c$. Thus, $\widehat{W}(s^{\tau-1},\tau-1)=\{w,c\}$. Analogously to the previously investigated profile, both $w$ and $c$ can be the final winner. Hence, $PoA^+\geq \left\lfloor\frac{n}{2}\right\rfloor-1$. Furthermore, combining this conclusion with an application of Theorem~\ref{thm:poa_bounds} to the constructed profile, we obtain that $PoA^+= \left\lfloor\frac{n}{2}\right\rfloor-1$ for the constructed profile. This yields the tightness of Theorem~\ref{thm:poa_bounds} as required.

\end{proof}

\section{Remarks on Condorcet Winners}\label{appendix:condorcet}
In addition to the proofs of our main results, we would like to venture some additional thoughts on refinements of our protocol. Specifically, we would like to make a few notes about its relationship to Condorcet winners.

\begin{lemma}
Protocol\ref{cud_game} is not Condorcet-consistent.
\end{lemma}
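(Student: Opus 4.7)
My plan is to refute Condorcet-consistency by exhibiting a single small-scale counterexample: a truthful profile that admits an unambiguous Condorcet winner which the protocol never elects on any execution path. I would use $n=5$ voters, three valid alternatives $\{a,b,c\}$, the unanimity threshold $\sigma=n=5$, and deadline $\tau=4$, with two voters ranking $c\succ a\succ b$, two ranking $b\succ a\succ c$, and one ranking $a\succ b\succ c$. A routine pairwise check identifies $a$ as the Condorcet winner (beating both $b$ and $c$ by $3$ to $2$), while the initial plurality tally is $(\score_a,\score_b,\score_c)=(1,2,2)$.

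The key step is to trace the two effective iterations of Protocol~\ref{cud_game}. At $t=4$ the set of possible winners evaluated for the next slice is $\{b,c\}$, and I would confirm via a best-response inspection that under lazy consistent utilities no voter wants to move: the $b$- and $c$-supporters preserve their truthful top choice in $\widehat{W}$ by keeping their ballot, and the lone $a$-supporter finds $\topc_5(\widehat{W})=b$ regardless of her action, so the tie-break rule leaves her on $a$. At $t=3$ the set $\widehat{W}(\vecs,2)$ requires at least three votes, so no single-vote change can place $a$ into it, since $\score_a$ can rise at most to $2$. The best responses I would then enumerate send voters~$1,2$ from $c$ to $b$, voters~$3,4$ from $b$ to $c$, and voter~$5$ from $a$ to $b$; whichever voter the random draw in line~\ref{alg:line:random} picks, $\calF(\vecs^2,2)$ collapses to a singleton in $\{b,c\}$ and the protocol terminates with an alternative other than $a$.

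The main obstacle I expect is ensuring that the counterexample is insensitive to the choice of utility class. Under proactive consistent utilities, voter~$5$ at $t=4$ can prefer to switch from $a$ to $b$ because her top-in-$\widehat{W}$ stays $b$ while its score rises; I would need to verify that this only accelerates the removal of $a$ from the possible-winner sequence and still forces termination with $b$ or $c$, by re-running the same best-response enumeration after the updated score vector. Once this robustness check is in place, the profile establishes that Protocol~\ref{cud_game} elects a non-Condorcet winner with probability one on this instance, which is sufficient to conclude non-Condorcet-consistency.
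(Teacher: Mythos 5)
Your counterexample is correct: $a$ is indeed the Condorcet winner of your profile, and your trace is accurate under both utility classes (I checked: under lazy utilities nobody moves at $t=4$, all five voters raise hands at $t=3$, and every random draw collapses $\calF$ to $\{b\}$ or $\{c\}$; under proactive utilities voter~5 switches to $b$ at $t=4$ and a $c$-supporter then boosts $b$ at $t=3$, again ending with $b$). However, your route is genuinely different from the paper's. The paper avoids any execution trace: it constructs a profile in which the Condorcet winner $c$ is every voter's \emph{second} choice, with top choices split equally among three other alternatives, so $c$ has initial plurality score zero; then Lemma~\ref{lemma:if_change_vote_score_changes} (a voter only ever switches to a candidate whose score is weakly higher than that of the candidate she abandons) immediately implies no voter can ever move to $c$, so $c$ never enters $\widehat{W}$ and cannot win, for any $\tau$, any $\sigma$, and both voter types, in one stroke. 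Your approach buys something the paper's does not: it exhibits the failure on a Condorcet winner with \emph{positive} plurality support, showing the lock-out is not an artifact of the degenerate zero-score situation, and it is fully self-contained. What it costs is exactly the laborious part you flagged: a two-stage best-response enumeration that must be redone per utility class and per random draw. Note that you could get the paper's brevity while keeping your stronger instance by invoking Lemma~\ref{lemma:if_change_vote_score_changes} directly: in your profile $s_a^\tau=1<2=s_b^\tau=s_c^\tau$, so no voter may ever switch \emph{to} $a$; hence $a$ never gains votes, cannot reach the unanimity threshold, and can never be the declared winner --- no trace, no case split on lazy versus proactive, and no dependence on how ties in line~\ref{alg:line:random} are resolved.
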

\begin{proof}
Consider a preference profile, where an alternative $c\in C$ is the second choice for all voters, and the top choices divide equally among ${a_1,a_2,a_3}\subseteq C\setminus\{c\}$. Furthermore, assume that $c$ is not the default alternative.  

It is easy to see that $c$ is a Condorcet winner. However, since the initial score of $c$ is zero and, as a consequence of Lemma~\ref{lemma:if_change_vote_score_changes}, no voter will change her ballot in favour of $c$, and it will persistently remain outside of the set of possible winners. Hence, Protocol~\ref{cud_game} will either result in a default or any other alternative in $C\setminus\{c\}$. That is, Protocol~\ref{cud_game} is not Condorcet-consistent. 
\end{proof}

However, it is marginally easier for a Condorcet winner to survive our Protocol and become its winner, as the following augmentation of Theorem~\ref{thm:single_winner} states.
\begin{conj}
Let $\veca=(a_1,\ldots,a_n)$ be the truthful profile, let $\tau$ be the deadline time, and let $\vecb$ be the ballot profile induced by $\veca$, i.e., $b_i=\topc_i(C)$). Let $c\in C^+$ be a Condorcet winner under the profile $\veca$. If in addition holds that  $\score_c(\vecb)\geq\max\left\{\left\lceil\frac{n}{2}\right\rceil-1,\sigma-\tau\right\}$ then CUD terminates with $c$ as the winner.
\end{conj}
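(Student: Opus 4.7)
The plan is to adapt the strategy of Theorem~\ref{thm:single_winner} by using the Condorcet property of $c$ to replace its strict-majority hypothesis on the set $T_c = \{i : \topc_i(C) = c\}$. Corollary~\ref{thm:finer_bound} together with the assumption $\score_c(\vecb) \geq \sigma - \tau$ already guarantees termination with a non-default winner, so I would argue by contradiction, letting $c' \neq c$ be the declared winner.

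The first building block mirrors the opening of the proof of Theorem~\ref{thm:single_winner}: while $c$ is in $\widehat{W}$, no voter in $T_c$ defects from $c$. For $i \in T_c$ with $c \in \widehat{W}(s^t, t)$, we have $\topc_i(\widehat{W}(s^t, t)) = c$, so by the consistency of $u_i$ no alternative ballot strictly improves her top-in-$\widehat{W}$, giving the invariant $s_c^t \geq |T_c| \geq \left\lceil\frac{n}{2}\right\rceil - 1$ while $c$ is a possible winner. The Condorcet property then enters through Lemma~\ref{lemma:vote_for_top}: for any $c' \in \widehat{W}(s^t, t) \setminus \{c\}$, each voter currently voting $c'$ must satisfy $c' = \topc_i(\widehat{W}(s^t, t))$, and because $c \in \widehat{W}(s^t, t)$ too this forces $c' \succ_i c$, i.e., $i \in V_{c', c} = \{i : c' \succ_i c\}$. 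Condorcet yields $|V_{c', c}| < \frac{n}{2} \leq \sigma$, so $s_{c'}^t \leq |V_{c', c}| < \frac{n}{2}$ as long as $c \in \widehat{W}$. Consequently, while $c \in \widehat{W}$, no competitor can become the sole element of $\widehat{W}$, and the only way any $c' \neq c$ can be declared the winner is if $c$ itself exits $\widehat{W}$ at some earlier step.

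Hence the main obstacle is proving that $c$ never leaves $\widehat{W}$. Although no $T_c$-voter defects, $c$ can still be forced out passively, since $\widehat{W}$-membership requires $s_c^t \geq \sigma - t$ and the threshold $\sigma - t$ eventually overtakes $|T_c|$ near $t = \sigma - |T_c|$. The natural strategy would be to argue that by this critical step each competitor's score, capped by the corresponding $|V_{c', c}| < \frac{n}{2}$, has already fallen below $\sigma - t$, so $\widehat{W}$ shrinks to $\{c\}$ and the process terminates with $c$ before $c$ itself is pushed out. Making this simultaneous-exit argument precise runs into a genuine difficulty: at boundary steps, a $T_c$-voter who sees an empty or default-only $\widehat{W}(s^t, t-1)$ under the keep option is no longer prevented by consistency from switching to some $c'$ whose score would just cross the threshold, and a single random pick at line~\ref{alg:line:random} of Protocol~\ref{cud_game} can then tip the outcome to $c'$. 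Handling this boundary case—or ruling it out by exploiting Condorcet more cleverly than through raw size bounds—is exactly where a proof of the conjecture would have to do its work, and it is also the most likely place for the conjecture to need strengthening (for instance, by replacing $\left\lceil\frac{n}{2}\right\rceil - 1$ with a slightly larger bound when $n$ is even).
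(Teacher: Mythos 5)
There is an important mismatch between your assignment and the paper: this statement is not a theorem of the paper at all. It appears in Appendix B explicitly as a \emph{Conjecture}, offered as a possible ``augmentation'' of Theorem~\ref{thm:single_winner}, and the authors give no proof of it. So there is no reference argument to compare yours against, and your refusal to claim a complete proof is the honest and correct position. The pieces you do assemble are sound: Corollary~\ref{thm:finer_bound} gives termination with a non-default winner; Lemma~\ref{lemma:vote_for_top} plus the Condorcet property of $c$ caps every rival $c'\in\widehat{W}\setminus\{c\}$ at $s_{c'}<\frac{n}{2}$ while $c\in\widehat{W}$; and you correctly reduce the whole question to whether $c$ can be forced out of the possible-winner set, with the dangerous moment being the boundary step at which the ``keep'' option forecasts a default-only outcome.

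What you should know is that the obstacle you isolated is not merely where a proof gets hard --- it is fatal, and the conjecture as stated is false. Take $n=10$, $\sigma=n$ (unanimity), candidates $c,c',c''$ plus the default $\default$, and the profile: four voters with $c\succ c'\succ c''$, four with $c'\succ c\succ c''$, two with $c''\succ c\succ c'$. Then $c$ is a Condorcet winner ($6{:}4$ against $c'$, $8{:}2$ against $c''$) and $\score_c(\vecb)=4=\lceil n/2\rceil-1\geq\sigma-\tau$ whenever $\tau\geq 6$, so the hypothesis holds. Take $\tau=6$ (for lazy voters any $\tau\geq 6$ works, since the scores freeze at $(4,4,2)$ until the critical step). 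At the step where keeping yields $\widehat{W}(\vecs,t-1)=\emptyset$, every voter applies to move, and the unique best response of each of the four supporters of $c$ is to switch to $c'$: that is their only ballot whose forecast avoids $\default$, and consistency of $u_i$ then forces the switch. With probability $4/10$ one of them is drawn at line~\ref{alg:line:random} of Protocol~\ref{cud_game}, after which $\calF=\{c'\}$ and $c'$ is declared the winner. Hence CUD does not always terminate with $c$. Note also that your proposed repair (a larger bound only for even $n$) points in slightly the wrong direction: the same construction with $n=11$ (five supporters of $c$, five of $c'$, one of $c''$) kills the odd case as well. The real dividing line is the possibility of an initial plurality tie between $c$ and a rival: Condorcet-ness excludes such a tie only at $\score_c(\vecb)=\frac{n}{2}$ for even $n$ (where the conjecture plausibly survives with that strengthened bound), while for odd $n$ there is no integer between $\lceil n/2\rceil-1$ and Theorem~\ref{thm:single_winner}'s bound $\lfloor n/2\rfloor+1$, so the Condorcet assumption buys nothing there.
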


\section{Full simulation results }\label{appendix:uniform}

 \begin{table} 
\caption{Number of vote changes: the average  and standard deviation of the number of vote changes required to reach a consensus.}
\label{appendixtable:2}
\begin{small}
\begin{center}
   \begin{tabular}{| c| c c| c c| }
    \hline
    & \multicolumn{2}{c |}{\textbf{10 voters}} & \multicolumn{2}{c |}{\textbf{20 voters}} \\
     \textbf{Data set} & \textbf{lazy} & \textbf{proactive} & \textbf{lazy} & \textbf{proactive}  \\
     \hline
    \textbf{Uniform5} & $5.06  \pm  2.95$ & $5.15  \pm 3.02  $ & $10.74  \pm 6.02  $ & $10.83 \pm 6.1   $   \\
    \textbf{Uniform6} &  $5.03 \pm 3.64$ & $5.09 \pm 3.7$ & $10.11 \pm 7.71$ & $10.15 \pm 7.75$ \\
    \textbf{Uniform7} & $5 \pm 3.74$ &   $5.05 \pm 3.79$ & $9.98 \pm 8.05$ & $10.01 \pm 8.07$\\
    \textbf{Uniform8} &  $5 \pm 3.92$ & $5.05 \pm 3.97$  & $9.89 \pm 8.13$ & $9.92 \pm 8.17$ \\
    \textbf{Courses 2003} & $4.98 \pm 2.45 $   & $6.46  \pm 3.54 $  &$ 11.31 \pm 5.45  $ & $13.09  \pm 6.72  $  \\
    \textbf{Courses 2004} & $3.55 \pm 2.05  $ & $5.73  \pm 3.78 $ & $7.5  \pm 4.15  $&$ 11.19  \pm 7.04 $  \\
    \textbf{Sushi} & $5.71 \pm 2.9 $ & $ 6.75 \pm 3.7 $ &$ 11.11  \pm 6.32 $ & $12.42 \pm 7.31 $ \\
    \textbf{T-shirts} & $ 5.66 \pm 2.5  $ & $ 6.91 \pm 3.45  $ & $12.38 \pm 5.97  $ & $13.76 \pm 6.93   $   \\
    \hline
    \end{tabular}
    
    \begin{tabular}{| c| c c|   }
    \hline
    & \multicolumn{2}{c |}{\textbf{30 voters}}   \\
     \textbf{Data set} & \textbf{lazy} & \textbf{proactive}   \\
     \hline
    \textbf{Uniform5}   & $19.46 \pm 7.6     $& $19.49 \pm $ 7.63 \\
    \textbf{Uniform6}   & $18.55 \pm 10.1$ & $18.64 \pm 10.17$  \\
    \textbf{Uniform7}  & $17.39 \pm 11.15$ & $17.43 \pm 11.18$   \\
    \textbf{Uniform8}   & $17.22 \pm 11.54$ & $17.24 \pm 11.55$  \\
    \textbf{Courses 2003}   &$ 18.7 \pm 6.7 $    &$ 21.17 \pm 8.22$\\
    \textbf{Courses 2004} &  $11.98 \pm  6.37 $ & $ 17.13 \pm 10.25$\\
    \textbf{Sushi}  & $18.61 \pm 8.14 $ & $20.71  \pm 9.52 $\\
    \textbf{T-shirts}   & $18.74 \pm 9.43    $ & $ 20.43 \pm 10.6 $ \\
    \hline
    \end{tabular}
\end{center}
\end{small}
\end{table}

\begin{table}
\caption{The average and standard deviations of the upper bounds for the additive price of anarchy for 10 voters for different initial $\tau$ (time until the deadline). The $PoA^+$ is identical for both lazy and proactive voters. }
\label{appendixtable:3}
\begin{small}
\begin{center}

\begin{tabular}{|L|c|c|c|c|c|}\hline
      \textbf{Time until the deadline}&\textbf{Uniform5}&\textbf{Uniform6}&\textbf{Uniform7}&\textbf{Uniform8} \\\hline
      \textbf{2}&$0 \pm 0$  & $0 \pm 0$  &$0 \pm 0$  &$0 \pm 0$ \\
\textbf{3}&$0 \pm 0$  & $0 \pm 0$  &$0 \pm 0$  &$0 \pm 0$ \\
\textbf{4}&$0 \pm 0$  & $0 \pm 0$  &$0 \pm 0$  &$0 \pm 0$ \\
\textbf{5}&$0 \pm 0$  & $0 \pm 0$  &$0 \pm 0$  &$0 \pm 0$\\ 
\textbf{6}&$0 \pm 0$  & $0 \pm 0$  &$0 \pm 0$  &$0 \pm 0$ \\
\textbf{7}&$0.1 \pm 0.3$  & $0.15 \pm 0.37$  &$0 \pm 0$  &$0.1 \pm 0.3$ \\
\textbf{8}&$0.85 \pm 0.74$  & $0.85 \pm 0.67$  &$0.75 \pm 0.63$  &$0.65 \pm 0.66$ \\
\textbf{9}&$1 \pm 0.91$  & $1.2 \pm 0.77$  &$1.1 \pm 0.78$  &$1.5 \pm 0.82$ \\
\textbf{10}&$1 \pm 0.91$  & $1.2 \pm 0.77$  &$1.1 \pm 0.78$  &$1.5 \pm 0.82$ \\
\textbf{11}&$1 \pm 0.91$  & $1.2 \pm 0.77$  &$1.1 \pm 0.78$  &$1.5 \pm 0.82$ \\

         \hline
    \end{tabular}

  \begin{tabular}{|L|c|c|c|c|c|}\hline
      \textbf{Time until the deadline}&\textbf{Courses 2003}&\textbf{Courses 2004}&\textbf{Sushi} &\textbf{T-Shirts}\\\hline
    \textbf{2}   & $0 \pm 0$  &$0 \pm 0$  &$0 \pm 0$  &$0 \pm 0$  \\
    \textbf{3}    & $0 \pm 0$  &$0 \pm 0$  &$0 \pm 0$  &$0 \pm 0$   \\
    \textbf{4}    & $0 \pm 0$  &$0 \pm 0$  &$0 \pm 0$  &$0 \pm 0$    \\
    \textbf{5}    & $0 \pm 0$  &$0 \pm 0$  &$0 \pm 0$  &$0 \pm 0$    \\
    \textbf{6}   & $0 \pm 0$  &$0 \pm 0$  &$0 \pm 0$  &$0 \pm 0$  \\
    \textbf{7}    & $0.15 \pm 0.36$  &$0.05 \pm 0.22$  &$0 \pm 0$  &$0.15 \pm 0.36$  \\
    \textbf{8}    & $0.4 \pm 0.59$  &$0.05 \pm 0.22$  &$0.15 \pm 0.48$  &$0.55 \pm 0.81$  \\
    \textbf{9}    & $0.7 \pm 1.02$  &$0.05 \pm 0.22$  &$0.45 \pm 1.04$  &$0.65 \pm 0.86$  \\
    \textbf{10}    & $0.7 \pm 1.02$  &$0.05 \pm 0.22$  &$0.45 \pm 1.04$  &$0.65 \pm 0.86$  \\
    \textbf{11}    & $0.7 \pm 1.02$  &$0.05 \pm 0.22$  &$0.45 \pm 1.04$  &$0.65 \pm 0.86$  \\
    \hline
    \end{tabular}%

\end{center}
\end{small}
\end{table}

\begin{table}
\caption{The average and standard deviations of the upper bounds for the additive price of anarchy for 20 voters for different initial $\tau$ (time until the deadline). The $PoA^+$ is identical for both lazy and proactive voters. }
\label{appendixtable:4}
\begin{small}
\begin{center}

\begin{tabular}{|L|c|c|c|c|c|}\hline
      \textbf{Time until the deadline}&\textbf{Uniform5}&\textbf{Uniform6}&\textbf{Uniform7}&\textbf{Uniform8} \\\hline
         \textbf{2}&$0 \pm 0$  & $0 \pm 0$  &$0 \pm 0$  &$0 \pm 0$ \\
\textbf{3}&$0 \pm 0$  & $0 \pm 0$  &$0 \pm 0$  &$0 \pm 0$ \\
\textbf{4}&$0 \pm 0$  & $0 \pm 0$  &$0 \pm 0$  &$0 \pm 0$ \\
\textbf{5}&$0 \pm 0$  & $0 \pm 0$  &$0 \pm 0$  &$0 \pm 0$ \\
\textbf{6}&$0 \pm 0$  & $0 \pm 0$  &$0 \pm 0$  &$0 \pm 0$ \\
\textbf{7}&$0 \pm 0$  & $0 \pm 0$  &$0 \pm 0$  &$0 \pm 0$ \\
\textbf{8}&$0 \pm 0$  & $0 \pm 0$  &$0 \pm 0$  &$0 \pm 0$ \\
\textbf{9}&$0 \pm 0$  & $0 \pm 0$  &$0 \pm 0$  &$0 \pm 0$ \\
\textbf{10}&$0 \pm 0$  & $0 \pm 0$  &$0 \pm 0$  &$0 \pm 0$ \\
\textbf{11}&$0 \pm 0$  & $0 \pm 0$  &$0 \pm 0$  &$0 \pm 0$ \\
\textbf{12}&$0 \pm 0$  & $0 \pm 0$  &$0 \pm 0$  &$0 \pm 0$ \\
\textbf{13}&$0.1 \pm 0.3$  & $0 \pm 0$  &$0 \pm 0$  &$0 \pm 0$ \\
\textbf{14}&$0.1 \pm 0.3$  & $0 \pm 0$  &$0 \pm 0$  &$0 \pm 0$ \\
\textbf{15}&$0.9 \pm 1.06$  & $0.15 \pm 0.37$  &$0 \pm 0$  &$0 \pm 0$ \\
\textbf{16}&$1.75 \pm 1.19$  & $1.2 \pm 1.15$  &$0.75 \pm 0.84$  &$0.85 \pm 1.08$ \\
\textbf{17}&$2.05 \pm 1.34$  & $2.15 \pm 0.99$  &$1.9 \pm 1.01$  &$1.8 \pm 0.82$ \\
\textbf{18}&$2.05 \pm 1.34$  & $2.3 \pm 0.98$  &$2.6 \pm 1.17$  &$2.4 \pm 1.08$ \\
\textbf{19}&$2.05 \pm 1.34$  & $2.3 \pm 0.98$  &$2.6 \pm 1.17$  &$2.4 \pm 1.08$ \\
\textbf{20}&$2.05 \pm 1.34$  & $2.3 \pm 0.98$  &$2.6 \pm 1.17$  &$2.4 \pm 1.08$ \\
\textbf{21}&$2.05 \pm 1.34$  & $2.3 \pm 0.98$  &$2.6 \pm 1.17$  &$2.4 \pm 1.08$ \\

         \hline
    \end{tabular}

  \begin{tabular}{|L|c|c|c|c|c|}\hline
      \textbf{Time until the deadline}&\textbf{Courses 2003}&\textbf{Courses 2004}&\textbf{Sushi} &\textbf{T-Shirts}\\\hline
    \textbf{2}&$0 \pm 0$  & $0 \pm 0$  &$0 \pm 0$  &$0 \pm 0$ \\
\textbf{3}&$0 \pm 0$  & $0 \pm 0$  &$0 \pm 0$  &$0 \pm 0$ \\
\textbf{4}&$0 \pm 0$  & $0 \pm 0$  &$0 \pm 0$  &$0 \pm 0$ \\
\textbf{5}&$0 \pm 0$  & $0 \pm 0$  &$0 \pm 0$  &$0 \pm 0$ \\
\textbf{6}&$0 \pm 0$  & $0 \pm 0$  &$0 \pm 0$  &$0 \pm 0$ \\
\textbf{7}&$0 \pm 0$  & $0 \pm 0$  &$0 \pm 0$  &$0 \pm 0$ \\
\textbf{8}&$0 \pm 0$  & $0 \pm 0$  &$0 \pm 0$  &$0 \pm 0$ \\
\textbf{9}&$0 \pm 0$  & $0 \pm 0$  &$0 \pm 0$  &$0 \pm 0$ \\
\textbf{10}&$0 \pm 0$  & $0 \pm 0$  &$0 \pm 0$  &$0 \pm 0$ \\
\textbf{11}&$0 \pm 0$  & $0 \pm 0$  &$0 \pm 0$  &$0 \pm 0$ \\
\textbf{12}&$0.05 \pm 0.22$  & $0 \pm 0$  &$0 \pm 0$  &$0 \pm 0$ \\
\textbf{13}&$0.05 \pm 0.22$  & $0.2 \pm 0.52$  &$0 \pm 0$  &$0 \pm 0$ \\
\textbf{14}&$0.2 \pm 0.52$  & $0.25 \pm 0.55$  &$0.1 \pm 0.45$  &$0.05 \pm 0.22$ \\
\textbf{15}&$0.65 \pm 0.99$  & $0.25 \pm 0.55$  &$0.65 \pm 0.88$  &$0.3 \pm 0.57$ \\
\textbf{16}&$1.35 \pm 1.39$  & $0.25 \pm 0.55$  &$1.5 \pm 1.32$  &$0.95 \pm 0.76$ \\
\textbf{17}&$1.85 \pm 1.79$  & $0.25 \pm 0.55$  &$1.95 \pm 1.47$  &$1.75 \pm 1.21$ \\
\textbf{18}&$1.9 \pm 1.86$  & $0.25 \pm 0.55$  &$2.1 \pm 1.55$  &$2.45 \pm 1.64$ \\
\textbf{19}&$1.9 \pm 1.86$  & $0.25 \pm 0.55$  &$2.4 \pm 1.7$  &$2.45 \pm 1.64$ \\
\textbf{20}&$1.9 \pm 1.86$  & $0.25 \pm 0.55$  &$2.4 \pm 1.7$  &$2.45 \pm 1.64$ \\
\textbf{21}&$1.9 \pm 1.86$  & $0.25 \pm 0.55$  &$2.4 \pm 1.7$  &$2.45 \pm 1.64$ \\

    \hline
    \end{tabular}%

\end{center}
\end{small}
\end{table}

\begin{table}
\caption{The average and standard deviations of the upper bounds for the additive price of anarchy for 30 voters for different initial $\tau$ (time until the deadline). The $PoA^+$ is identical for both lazy and proactive voters. }
\label{appendixtable:5}
\begin{small}
\begin{center}

\begin{tabular}{|L|c|c|c|c|c|}\hline
      \textbf{Time until the deadline}&\textbf{Uniform5}&\textbf{Uniform6}&\textbf{Uniform7}&\textbf{Uniform8} \\\hline
          \textbf{2}&$0 \pm 0$  & $0 \pm 0$  &$0 \pm 0$  &$0 \pm 0$ \\
\textbf{3}&$0 \pm 0$  & $0 \pm 0$  &$0 \pm 0$  &$0 \pm 0$ \\
\textbf{4}&$0 \pm 0$  & $0 \pm 0$  &$0 \pm 0$  &$0 \pm 0$ \\
\textbf{5}&$0 \pm 0$  & $0 \pm 0$  &$0 \pm 0$  &$0 \pm 0$ \\
\textbf{6}&$0 \pm 0$  & $0 \pm 0$  &$0 \pm 0$  &$0 \pm 0$ \\
\textbf{7}&$0 \pm 0$  & $0 \pm 0$  &$0 \pm 0$  &$0 \pm 0$ \\
\textbf{8}&$0 \pm 0$  & $0 \pm 0$  &$0 \pm 0$  &$0 \pm 0$ \\
\textbf{9}&$0 \pm 0$  & $0 \pm 0$  &$0 \pm 0$  &$0 \pm 0$ \\
\textbf{10}&$0 \pm 0$  & $0 \pm 0$  &$0 \pm 0$  &$0 \pm 0$ \\
\textbf{11}&$0 \pm 0$  & $0 \pm 0$  &$0 \pm 0$  &$0 \pm 0$ \\
\textbf{12}&$0 \pm 0$  & $0 \pm 0$  &$0 \pm 0$  &$0 \pm 0$ \\
\textbf{13}&$0 \pm 0$  & $0 \pm 0$  &$0 \pm 0$  &$0 \pm 0$ \\
\textbf{14}&$0 \pm 0$  & $0 \pm 0$  &$0 \pm 0$  &$0 \pm 0$ \\
\textbf{15}&$0 \pm 0$  & $0 \pm 0$  &$0 \pm 0$  &$0 \pm 0$ \\
\textbf{16}&$0 \pm 0$  & $0 \pm 0$  &$0 \pm 0$  &$0 \pm 0$ \\
\textbf{17}&$0 \pm 0$  & $0 \pm 0$  &$0 \pm 0$  &$0 \pm 0$ \\
\textbf{18}&$0 \pm 0$  & $0 \pm 0$  &$0 \pm 0$  &$0 \pm 0$ \\
\textbf{19}&$0 \pm 0$  & $0 \pm 0$  &$0 \pm 0$  &$0 \pm 0$ \\
\textbf{20}&$0 \pm 0$  & $0 \pm 0$  &$0 \pm 0$  &$0 \pm 0$ \\
\textbf{21}&$0.15 \pm 0.36$  & $0 \pm 0$  &$0 \pm 0$  &$0 \pm 0$ \\
\textbf{22}&$0.2 \pm 0.41$  & $0 \pm 0$  &$0 \pm 0$  &$0 \pm 0$ \\
\textbf{23}&$1.15 \pm 1.21$  & $0.2 \pm 0.52$  &$0 \pm 0$  &$0.05 \pm 0.22$ \\
\textbf{24}&$2.25 \pm 1.35$  & $1.4 \pm 1.14$  &$0.55 \pm 0.81$  &$0.4 \pm 0.87$ \\
\textbf{25}&$2.7 \pm 1.47$  & $2.25 \pm 1.02$  &$1.55 \pm 1.13$  &$1.6 \pm 1.03$ \\
\textbf{26}&$2.7 \pm 1.47$  & $3.1 \pm 1.25$  &$2.5 \pm 1.26$  &$2.7 \pm 1.07$ \\
\textbf{27}&$2.7 \pm 1.47$  & $3.1 \pm 1.25$  &$2.9 \pm 1.72$  &$3.1 \pm 1.15$ \\
\textbf{28}&$2.7 \pm 1.47$  & $3.1 \pm 1.25$  &$2.9 \pm 1.72$  &$3.1 \pm 1.15$ \\
\textbf{29}&$2.7 \pm 1.47$  & $2.89 \pm 1.25$  &$2.9 \pm 1.72$  &$3.1 \pm 1.15$ \\
\textbf{30}&$2.7 \pm 1.47$  & $3.1 \pm 1.24$  &$2.9 \pm 1.72$  &$3.1 \pm 1.15$ \\
\textbf{31}&$2.7 \pm 1.47$  & $3.1 \pm 1.24$  &$2.9 \pm 1.72$  &$3.1 \pm 1.15$ \\

         \hline
    \end{tabular}

  \begin{tabular}{|L|c|c|c|c|c|}\hline
      \textbf{Time until the deadline}&\textbf{Courses 2003}&\textbf{Courses 2004}&\textbf{Sushi} &\textbf{T-Shirts}\\\hline
    \textbf{2}&$0 \pm 0$  & $0 \pm 0$  &$0 \pm 0$  &$0 \pm 0$ \\
\textbf{3}&$0 \pm 0$  & $0 \pm 0$  &$0 \pm 0$  &$0 \pm 0$ \\
\textbf{4}&$0 \pm 0$  & $0 \pm 0$  &$0 \pm 0$  &$0 \pm 0$ \\
\textbf{5}&$0 \pm 0$  & $0 \pm 0$  &$0 \pm 0$  &$0 \pm 0$ \\
\textbf{6}&$0 \pm 0$  & $0 \pm 0$  &$0 \pm 0$  &$0 \pm 0$ \\
\textbf{7}&$0 \pm 0$  & $0 \pm 0$  &$0 \pm 0$  &$0 \pm 0$ \\
\textbf{8}&$0 \pm 0$  & $0 \pm 0$  &$0 \pm 0$  &$0 \pm 0$ \\
\textbf{9}&$0 \pm 0$  & $0 \pm 0$  &$0 \pm 0$  &$0 \pm 0$ \\
\textbf{10}&$0 \pm 0$  & $0 \pm 0$  &$0 \pm 0$  &$0 \pm 0$ \\
\textbf{11}&$0 \pm 0$  & $0 \pm 0$  &$0 \pm 0$  &$0 \pm 0$ \\
\textbf{12}&$0 \pm 0$  & $0 \pm 0$  &$0 \pm 0$  &$0 \pm 0$ \\
\textbf{13}&$0 \pm 0$  & $0 \pm 0$  &$0 \pm 0$  &$0 \pm 0$ \\
\textbf{14}&$0 \pm 0$  & $0 \pm 0$  &$0 \pm 0$  &$0 \pm 0$ \\
\textbf{15}&$0 \pm 0$  & $0 \pm 0$  &$0 \pm 0$  &$0 \pm 0$ \\
\textbf{16}&$0 \pm 0$  & $0 \pm 0$  &$0 \pm 0$  &$0 \pm 0$ \\
\textbf{17}&$0 \pm 0$  & $0 \pm 0$  &$0 \pm 0$  &$0 \pm 0$ \\
\textbf{18}&$0 \pm 0$  & $0 \pm 0$  &$0 \pm 0$  &$0 \pm 0$ \\
\textbf{19}&$0.1 \pm 0.45$  & $0.15 \pm 0.67$  &$0 \pm 0$  &$0 \pm 0$ \\
\textbf{20}&$0.35 \pm 0.88$  & $0.45 \pm 1$  &$0 \pm 0$  &$0 \pm 0$ \\
\textbf{21}&$0.35 \pm 0.88$  & $0.45 \pm 1$  &$0.05 \pm 0.22$  &$0 \pm 0$ \\
\textbf{22}&$1.15 \pm 1.63$  & $0.45 \pm 1$  &$0.4 \pm 0.82$  &$0.05 \pm 0.22$ \\
\textbf{23}&$1.7 \pm 1.95$  & $0.45 \pm 1$  &$0.95 \pm 1.05$  &$0.25 \pm 0.44$ \\
\textbf{24}&$1.7 \pm 1.95$  & $0.45 \pm 1$  &$1.25 \pm 1.12$  &$0.7 \pm 0.66$ \\
\textbf{25}&$2.15 \pm 2.11$  & $0.45 \pm 1$  &$1.55 \pm 1.5$  &$1.75 \pm 1.21$ \\
\textbf{26}&$2.6 \pm 2.19$  & $0.45 \pm 1$  &$1.85 \pm 1.76$  &$2.75 \pm 1.8$ \\
\textbf{27}&$2.8 \pm 2.65$  & $0.45 \pm 1$  &$2.15 \pm 1.93$  &$3.2 \pm 1.94$ \\
\textbf{28}&$2.8 \pm 2.65$  & $0.45 \pm 1$  &$2.2 \pm 2.04$  &$3.35 \pm 1.95$ \\
\textbf{29}&$2.8 \pm 2.65$  & $0.45 \pm 1$  &$2.2 \pm 2.04$  &$3.35 \pm 1.95$ \\
\textbf{30}&$2.8 \pm 2.65$  & $0.45 \pm 1$  &$2.2 \pm 2.04$  &$3.35 \pm 1.95$ \\
\textbf{31}&$2.8 \pm 2.65$  & $0.45 \pm 1$  &$2.2 \pm 2.04$  &$3.35 \pm 1.95$ \\

    \hline
    \end{tabular}%

\end{center}
\end{small}
\end{table}

\end{document}